\def\argmax{\mathop{\rm argmax}}
\newtheorem{lemma}{Lemma}
\newtheorem{theorem}{Theorem}
\def\boxit#1{\vbox{\hrule\hbox{\vrule\kern6pt\vbox{\kern6pt#1\kern6pt}\kern6pt\vrule}\hrule}}
\title{High dimensional discriminant rules with shrinkage estimators of the covariance matrix and mean vector}
\author{Jaehoan Kim, Hoyoung Park and Junyong Park}
\date{November 2022}
\author{Jaehoan Kim\thanks{Department of Statistics, Texas A$\&$M University, TX, USA, \texttt{k1mjh6561@tamu.edu}},~ 
Junyong Park\thanks{Department of Statistics, Seoul National University, Seoul, Korea,  \texttt{junyongpark@snu.ac.kr}}, and 
Hoyoung Park\thanks{Department of Statistics, Sookmyung Women’s University, Seoul, Korea,  \texttt{hyparks@sookmyung.ac.kr}}
}
\begin{document}

\maketitle
\begin{abstract}
 Linear discriminant analysis (LDA) is a typical method for classification problems with large dimensions and small samples. There are various types of LDA methods that are based on the different types of estimators for the covariance matrices and mean vectors. 
  In this paper, we consider shrinkage methods based on a non-parametric approach. 
  For the precision matrix, methods based on the sparsity structure or data splitting are examined. Regarding the estimation of mean vectors, Non-parametric Empirical Bayes 
  (NPEB) methods and Non-parametric Maximum Likelihood Estimation (NPMLE) methods, also known as $f$-modeling and $g$-modeling, respectively, are adopted. 
The performance of linear discriminant rules based on combined estimation strategies of the covariance matrix and mean vectors are analyzed in this study.
Particularly, the study presents a theoretical result on the performance of the NPEB method and compares it with previous studies.
Simulation studies with various covariance matrices and mean vector structures are conducted to evaluate the methods discussed in this paper. Furthermore, real data examples such as gene expressions and EEG data are also presented.

\noindent {\bf Keywords} : High dimensional discriminant analysis; Nonparametric maximum likelihood estimation; Nonparametric empirical Bayes; Estimation of precision matrix 
\end{abstract}

\section{Introduction}

Discriminant analysis is one of the most widely emerging problems in various fields, including marketing, biomedical studies, sociology, and psychology. Due to the development of data collection technology, the data which contains the larger number of features than the number of samples became prevalent. Therefore, discriminant analysis in high-dimensional situations became important. Owing to its simplicity and optimality under the knowledge of parameters (see \cite{anderson2003introduction}), the Fisher's linear discriminant analysis (LDA) received a lot of attention and has been studied intensively 
in both theory and practical applications. 
Since this linear discriminant rule requires mean vectors and the precision matrix, they should be estimated in practical situations. 
However, especially for high dimensional situations, precision matrix should be estimated carefully. Since the sample covariance matrix is not invertible anymore, the inverse of the sample covariance matrix is no more valid as an estimator. Therefore, there have been numerous studies to resolve this problem and construct the linear discriminant rules.

First, there were several approaches to estimating the precision matrix as a diagonal one, by simply assuming the independence among features. Since this assumption makes the true covariance matrix to be diagonal, it only requires each feature’s precision to be estimated. The discriminant rule based on this assumption is called the independence rule (IR). In addition to that, feature selection had been added on this assumption with intuition to “reduce the noise from data”. This led to the feature annealed independence rule (FAIR), studied in \cite{fan2008high}, showing  that the FAIR method can improve the error rate of the IR method in real data situations. 

However, the IR and the FAIR  are  restrictive since they ignore all the possible dependence among features, which might be crucial if there exist some features which are crucially related, for example, voxels in fMRI data. Therefore, there have been several approaches to preserve the `structure', which also can be considered as connectivity, of original data.

One possible approach is to slightly modify the sample covariance matrix into an invertible form using the idea of a shrinkage estimator. For the sample covariance matrix $S$ and $0<\lambda<1$, one can take $S' = (1-\lambda)S + \lambda I$ as the covariance estimator and use its inverse as the estimator of the precision matrix. In addition, other approaches of precision estimation are proposed to solve the optimization problem with penalty terms. See \cite{cai2011constrained} and \cite{friedman2008sparse} for more detailed explanations. \cite{lam2016nonparametric} suggested nonparametric sample splitting technique for the estimation.

There is some 
research on utilizing these estimators of covariance or precision matrix in high dimensional discriminant analysis. 
For example, \cite{park2022high} presented intensive comparisons 
among discriminant rules based on 
various estimators of covariance or precision matrix. 
It is shown that the non-parametric mean vector estimation method and the precision estimator proposed in \cite{lam2016nonparametric} provided the best overall performance.
Although \cite{park2022high} presented various results, some important methodologies are missing.  For example, mean vector estimation based on the  NPEB in \cite{greenshtein2009application} has  not been considered 
in their study and the sparse precision matrix has not been combined with such mean vector estimations, either.  

In addition to the estimation of covariance or precision matrix,  
mean vector estimation has several approaches as well. For the general mean estimation approach, \cite{efron2014two}
categorized estimation strategies 
in empirical Bayes into  the $f$-modeling and $g$-modeling
which had been actually used in many estimation problems before.   
There are several attempts to utilize these strategies into discriminant analysis problem.  
See  \cite{greenshtein2009application} and 
\cite{efron2009} for $f$-modeling and   \cite{dicker2016High} 
and \cite{park2022high} for $g$-modeling in high dimensional discriminant analysis. 
The theoretical properties of 
the discriminant rules based on these estimators 
are gradually unveiled especially for 
$g$-modelling rather than $f$-modelling; see  \cite{dicker2016High} and \cite{saha2020nonparametric} for theoretical studies 
on the discriminant rule based on $g$-modelling. 
Although  \cite{greenshtein2009application} 
and \cite{efron2009} 
used the $f$-modelling,  
to the best of our knowledge, 
there has  been no  
study on theoretical properties of 
high dimensional discriminant rule based on $f$-modelling for estimating mean vectors.  
It is worth noting that both $f$-modelling and $g$-modeeling in  mean vector estimations assume independence among features, which requires appropriate data decorrelation before applying these strategies.

In this context, our paper displays two main extensions. First, we elaborate on the impact of the aforementioned estimation methods for precision matrices and mean vectors on the constructed discriminant rules. Since mean vector estimation methods require the precision matrix to remove the dependency in the data before applied, the effect of the two estimators is intertwined. Hence, we cross check the estimation method, we analyze the impact of the estimators on real data examples. Second, we newly analyze the theoretical property of the discriminant rule based on the f-modeling mean vector estimator. With the knowledge about the precision matrix, we consider the asymptotic situation in which the number of features diverges to infinity and compare this rule's asymptotic properties with others. 

This paper is organized as follows. 
In section \ref{sec:Methods}, we review the Fisher's rule and estimation of the covariance matrix and mean vector. 
In section \ref{sec: Asymp}, we provide the asymptotic results for nonparametric empirical Bayes method 
and comparison with other methods. 
Section \ref{sec:Simulations}  
includes simulations studies to evaluate all the discriminant rules considered in this paper for various combinations of mean vectors and covariance matrices.   
In section \ref{sec:Realdata}, real data examples are also presented to compare all the discriminant rules. 
We summarize all the results in section \ref{sec:Concluding} as concluding remarks.



\section{Construction of discriminant rules}
\label{sec:Methods}
In this section, we  review the methods of estimating  precision matrix and mean vector in high dimensions.
Throughout this paper, we assume 
${\bf X}_{ij} \sim N_p(\boldsymbol{\mu}_i, \Sigma)$ for $i=1,2$ and $j=1,\cdots, n_i$. To avoid confusion, we use bold symbols to indicate vectors in this section.

A well-known linear discriminant rule, Fisher's linear discriminant analysis (LDA), can be written as  
$$ f (\boldsymbol{x}^{new}) = \frac{1}{2}\, [3- {\rm sign} \left \{\delta (\boldsymbol{x}^{new}) \right\}]$$

for the new data $\boldsymbol{x}^{new} \in \mathbb{R}^p$, where 
\begin{equation}\label{eqn:classification rule}
\delta(\boldsymbol{x}^{new})=\left(\boldsymbol{x}^{new}-\frac{\boldsymbol{\mu}_{1}+\boldsymbol{\mu}_{2}}{2}\right)^\top {\Sigma}^{-1}\left(\boldsymbol{\mu}_{1}-\boldsymbol{\mu}_{2}\right).   
\end{equation}
$\delta(\boldsymbol{x}^{new})$ has the value $1$ or $2$ to indicate the group which the new observation is categorized. 
As one can see, Fisher's LDA requests two estimates, for $\boldsymbol {\mu}_i$ and $\Omega =\Sigma^{-1}$.  

First, when estimating $\Omega$, difficulty arises in high-dimensional situations since one cannot directly plug in the inverse of the sample covariance matrix due to its rank deficiency. Therefore, several alternatives exist to estimate $\Omega$ directly in high-dimensional situations. In this paper, we utilized the precision estimation methods using 1) $L^1$ penalty of the precision matrix (which is called the graphical Lasso method; we call this method `glasso' in this paper; see \cite{friedman2008sparse}), and, 2) random sample splitting (we call this method `LAM'; see \cite{lam2016nonparametric}). Of course, we can assume the precision matrix to be diagonal and naively estimate each element. We write this method as `IR,' which is the shorthand for `Independence Rule'.

Regarding the estimation of $\boldsymbol{\mu}_i$, 
we use two methods based on two types of  nonparametric empirical Bayes estimations, namely 
$g$-modeling and $f$-modeling introduced in \cite{efron2014two}.
Such mean vector estimations have been used in \cite{greenshtein2009application} and \cite{dicker2016High} under the independent assumption of variables. Obviously, we also consider the sample mean as an estimate of $\boldsymbol{\mu}_i$.

The mainstream of this paper is to examine the performance of different types of  discriminant rules in high-dimensional situations and analyze their performance both practically and theoretically.

As a building block, we define the decorrelated observation ${\bf Z}_{ij}$ as

\begin{eqnarray}
{\bf Z}_{ij} = \Omega^{1/2} {\bf X}_{ij} \sim N(\Omega^{1/2} {\boldsymbol \mu}_i, I_p) 
\equiv N_p( {\boldsymbol \mu}^*_{i}, I_p )
\label{eqn:decorrelated}
\end{eqnarray}

where $I_p$ is $p \times p$ identity matrix. Based on  ${\bf Z}_{ij}$, the Fisher's rule is expressed as follows: 

\begin{eqnarray}
\delta(\boldsymbol{x}^{new})&=&\left(\boldsymbol{x}^{new}-\frac{\boldsymbol{\mu}_{1}+\boldsymbol{\mu}_{2}}{2}\right)^\top \Sigma^{-1}\left(\boldsymbol{\mu}_{1}-\boldsymbol{\mu}_{2}\right) \nonumber\\
   &=& \left\{\Sigma^{-1/2}\left(\boldsymbol{\mu}_{1}-\boldsymbol{\mu}_{2}\right)\right \}^\top  \Sigma^{-1/2}\left(\boldsymbol{x}^{new}-\frac{\boldsymbol{\mu}_{1}+\boldsymbol{\mu}_{2}}{2}\right)^\top  \nonumber\\
   &=& \left(\boldsymbol{\mu}^*_{1}-\boldsymbol{\mu}^*_{2}\right)^\top \left(\boldsymbol{z}^{new}-\frac{\boldsymbol{\mu}^*_{1}+\boldsymbol{\mu}^*_{2}}{2}\right) \label{eqn: Fisher discriminant rule from decorrelated data}\\
   &\equiv & \sum_{i=1}^p a_i z_i^{new} + a_0 \nonumber
\end{eqnarray}
where $\boldsymbol{z}^{new} =  \Omega^{1/2} \boldsymbol{x}^{new} = \Sigma^{-1/2} \boldsymbol{x}^{new}$,
 $a_i = \mu_{1i}^* -\mu_{2i}^*$ 
 and $a_0 = - \left(\boldsymbol{\mu}^*_{1}-\boldsymbol{\mu}^*_{2}\right)^\top  \left(\boldsymbol{\mu}^*_{1}+\boldsymbol{\mu}^*_{2}\right)/2 $. $z_i^{new}$ denotes the $i$-th component of $\boldsymbol{z}^{new}$.
 
In practice, of course, $ \Omega =\Sigma^{-1}$ is unknown, so we need to estimate $\hat \Omega$ and decorrelate ${\bf X}_{ij}$ with $\hat \Omega^{1/2}$.
If such estimates are accurate, one can expect that all variables in ${\bf Z}_{ij}$ are nearly independent, hence we can apply the mean estimation methods described later, which assumes the independence of variables.

To summarize, we need two estimates to construct Fisher's discriminant rule: one for $\Omega$ (or $\Sigma$) and the other one for $\boldsymbol{\mu}_i$ (or $\boldsymbol{\mu}_i^*$). We present these two procedures in the following sections. 


\subsection{Estimation of precision matrix}
In the case of the precision matrix estimation,  we present two approaches: one is based on some special structural assumption, and the other one is for the general cases without this assumption. For the former case, it is widely assumed that the precision matrix
has sparsity (relatively small number of nonzero components within the matrix) or graphical structure, as \cite{cai2011constrained} and \cite{friedman2008sparse}. However, this structural assumption has limitations to be applied for general multivariate data. On the other hand, the latter case can be generally utilized since it is free of any structural assumption. See, for example, \cite{lam2016nonparametric}. However, the data dimension gives a computational restriction compared to the former case.
In this paper, we consider the method based on the graphical model, called glasso in \cite{friedman2008sparse} as an example of the first approach. For the latter approach, we introduce the method based on random sample data splitting, called the LAM method in \cite{lam2016nonparametric}.
  

First, as previously mentioned, the glasso method is initially devised to estimate the precision matrix under the sparsity assumption. To utilize this assumption, \cite{friedman2008sparse} converted this estimation problem into the optimization problem, maximizing a penalized log-likelihood function with respect to the matrix $\Theta$. The penalty term is given in the $L_1$-norm of the precision matrix, which can favor the sparsity assumption. The objective function in glasso is written as 
\begin{equation}\label{eqn:graphical lasso}
    f(\Theta) = \log \det(\Theta) - tr(S_n\Theta)-\rho||\Theta||_1
\end{equation}
where $S_n$ is sample covariance matrix,  det$(\Theta)$ 
is the determinant of $\Theta$  and $\rho$ is  the regularization parameter. 
The optimal value $\Theta^* = \argmax_{\Theta}f (\Theta)$ is used as the estimator of $\Omega$. 

    
Subsequently, the LAM algorithm uses a random sample splitting idea. The fundamental issue for high dimensional data is that $S_n$ is not invertible, which hinders $S_n^{-1}$ to be used as an estimator for the precision matrix. To detour this problem, \cite{lam2016nonparametric} applied the following strategy to construct an invertible estimator for the covariance matrix. 

First, the true covariance matrix $\Sigma$ can be represented as
\begin{equation}
    \Sigma = PDP^T, D \succeq 0,
\end{equation}
where $P$ is the orthogonal matrix and $D$ diagonal matrix. This can be also written as
\begin{equation}\label{eqn: LAM sigma method}
    \Sigma = PDP^T = Pdiag(P^T\Sigma P)P^T,
\end{equation}
where $diag(A)$ denotes the diagonal matrix with diagonal elements of $A$. Based on this, to construct an invertible estimator $\hat{\Sigma}$ for $\Sigma$, \cite{lam2016nonparametric} first split $n$ observations into two groups of $n_1$ and $n_2 = n-n_1$ observations. Then, they used one group to estimate $P$ and the other to estimate $\Sigma$ for $diag(P^T\Sigma P)$. Since we already have two groups of observations, we can use the first group of observations to estimate $P$, and the second group of $n_2$ observations to estimate $\Sigma$ on the right-hand side of \eqref{eqn: LAM sigma method}. Therefore, when we denote the sample covariance matrix of two groups as $S_{n, 1}$, $S_{n, 2}$, and $\hat{P_1}$ the eigenvectors of $S_{n, 1}$, the estimator is represented as

\begin{equation}
    \hat{\Sigma}_{LAM} = \hat{P}_1diag(\hat{P}_1^TS_{n, 2}\hat{P}_1)\hat{P}_1^T.
\end{equation}

\subsection{Estimation of mean vector}
{
Estimation of the mean vector is another important component of Fisher LDA. 
In many cases, the estimation of the covariance matrix has been focused on high-dimensional discriminant analysis, 
and the sample means vector has been commonly used. 
Shrinkage estimator of the mean vector 
in high dimensional classification 
has been seriously considered 
under the assumption that the covariance matrix is a 
diagonal matrix, which is 
a modified form of the naive Bayes rule. 
\cite{greenshtein2009application} and \cite{dicker2016High} are 
representative work 
applying $f$-modeling and $g$-modeling to discriminant analysis 
under the assumption that all variables are independent. 
}

{ 
When we have independent 
random variables  $Y_i \sim N(\mu_i, 1)$ for  $1\leq i \leq p$,  a lot of research has been done on the simultaneous estimation of the mean vector, $\boldsymbol{\mu}=(\mu_1,\ldots, \mu_p)$. A typical example is the James-Stein estimator, but from a Bayesian point of view, it is an estimator under the assumption that each mean value is generated from a normal distribution, so it  has the disadvantage of not reflecting the various structures of the mean values such as bi-modality of mean values 
which may occur for the case of the sparsity of mean values. 
When $p$ is large, 
there are two typical methods in nonparametric empirical Bayes methods  discussed in \cite{efron2014two} : 
 $f$-modeling based on the estimation of marginal density and $g$-modeling based on the estimation of mixing distribution or prior distribution as a non-parametric method. The detail of these two methods are discussed shortly.
In particular, $f$-modeling has a simpler calculation process than $g$-modeling, however 
with the recent development of various algorithms that can be used for $g$-modeling, empirical Bayesian methods based on $g$-modeling are also widely used.
}

{In fact, when estimating the mean vector via these methods, the observed values are assumed to be independent. However, 
in classification, this assumption is not satisfied in general. Although there exist correlations among all variables,  existing studies such as \cite{greenshtein2009application} applied $f$-modeling method to correlated variables in high dimensional classification problems. 
{Instead, since we can obtain an estimate of the precision matrix following the previous section, we can remove the correlation among the variables using the square root matrix of this obtained estimate. } }

More specifically, consider $\boldsymbol{X}_i \sim N_p(\boldsymbol{\mu}, \Sigma)$ with $i = 1, \cdots, n$ and $p \times p$ positive definite matrix $\Omega^{1/2}$ where $\Omega = \Sigma^{-1}$. We consider the random variables $\boldsymbol{Z}_i$ as follows: 

\begin{equation}
\label{eqn:decorrelation}
    \boldsymbol Z_{i} \equiv \Omega^{1/2}\boldsymbol X_{i} \sim N_p(\Omega^{1/2}\boldsymbol \mu, I_p) \equiv N_p(\boldsymbol \mu^*, I_p),
\end{equation}
where $\boldsymbol \mu = (\mu_{1}, \cdots, \mu_{p})^T$, $\boldsymbol \mu^* = (\mu^*_{1}, \cdots, \mu^*_{p})^T$, $\boldsymbol X_i = (X_{i1}, \cdots, X_{ip})^T$ and ${\boldsymbol Z}_i = (Z_{i1}, \cdots, Z_{ip})^T$. 
Therefore, the $j$-th component of $\boldsymbol{Z}_i$, $Z_{ij}$, satisfies  $Z_{ij} \sim N(\mu_{j}^*, 1)$ and is independently distributed. 
{We estimate the mean vector  $\boldsymbol \mu^*_i$ based on decorrelated observations  ${\boldsymbol Z}_i = (Z_{i1}, \cdots, Z_{ip})^T$ and then transform back to $  \hat {\boldsymbol \mu}_i = R^{-1} \hat {\boldsymbol \mu}^*_i$, aligned with the assumption of $f$-modeling and $g$-modeling.} 



Now we provide $f$- and $g$-modeling based estimation methods in detail. We consider one dimensional random variable $Z$ and its observed value $z_1, \cdots, z_n$, under the hierarchical structure  
\begin{equation}
    Z_i \sim N(\mu_i, 1),\quad \mu_i \sim G.
\end{equation}
Here, $G$ is a cumulative distribution function of $\mu$. Under the Bayesian scheme, the probability distribution function of $Z$, $g^*$, is written as
\begin{equation}
    g^*(z) = \int \phi(z-v)\, dG(v),
\end{equation}
where $\phi(\cdot)$ is the probability distribution function of standard normal distribution, say $\phi(x) = \exp(-x^2/2)/\sqrt{2\pi}.$ Under this structure, the Bayes estimator for $\mu_i$ is 

\begin{eqnarray}
 {E(\mu_i |z_i)} 
 =  z_i + \frac{(g^*)'(z_i) }{ g^*(z_i) },
 \label{eqn:GMLE}
\end{eqnarray}
where $(g^*)'(z) = dg^*(z) / dz$. 

There are two categories of estimator for $E(\mu_i | z_i)$: 
First one is $\widehat {g^*(z)} =  \int \phi(z-v) d\hat G(v)$ with an estimator for $G$, called $g$-modeling. See \cite{jiang2009general} and \cite{efron2014two}. 
The other one is estimating $g^*$ directly by using density estimation based on the observed values $z_1,\ldots, z_p$, called $f$-modeling. 
See \cite{greenshtein2009application}. We denote the first method as `NPMLE' since we would estimate $G$ with maximum likelihood estimator (MLE), and the latter one as `NPEB'.

To begin with, the NPMLE method requires $G$ to be estimated and MLE can be used as the solution of the optimization problem 
\begin{equation}
    \hat{G} = \argmax_{F \in \mathcal{F}} {\sum_{i=1}^{n}} \log \{\int \phi(z_i - \mu) dF(\mu) \},\label{eqn: g modeling MLE}
\end{equation}
where $\mathcal{F}$ is a set of distribution functions. It cannot be solved in its original form since $\mathcal{F}$ is infinite-dimensional. Therefore, we must constrain this as a finite-dimensional one. \cite{koenker2014convex} restricted $\mathcal{F}$ into the set of piecewise constant distribution functions with $K+1$ regular grid points and converted this problem into $K$ dimensional convex optimization problem. See \cite{dicker2016High} for details. The well-performing behavior of $\hat{G}$ via this algorithm, even with a relatively small number of $K$ ($K \approx \sqrt{n}$), is justified in \cite{dicker2016High}. This algorithm can be implemented in R via
\texttt{REBayes} package.

In our data setting \eqref{eqn:decorrelation}, 
we use the notation of $z_{ij}$ to denote the $j$-th element of the $\boldsymbol{z}_i$ vector. Let $\bar{z}_j = \displaystyle\sum_{k=1}^{n} z_{kj}  / n_g$, the $j$-th component of the sample mean vector $\bar{\boldsymbol z}$ . Note that $\sqrt{n}\bar{z}_j \sim N(\sqrt{n}\mu_{j}^*, 1)$. Therefore, we substitute $\sqrt{n}\bar{z}_i$ for $z_i$ in \eqref{eqn: g modeling MLE} to eventually obtain the estimate for $\sqrt{n}\boldsymbol\mu_j^*$ and divide it by $\sqrt{n}$.

Subsequently, $f$-modeling method estimates $(g^*)'(z_i)$ and $g^*(z_i)$ of \eqref{eqn:GMLE} using the observed data $z_i$, $i = 1, \cdots, n$. This paper uses kernel estimators with normal kernels. See section 2 of \cite{brown2009nonparametric} for detail. We present the eventual form of the estimator $\hat{\mu}_{EB}$ as follows.

\begin{equation}\label{eqn: NPEB expression}
    (\hat{\boldsymbol\mu}_{EB})_i = \widehat{E(\boldsymbol\mu_i| \boldsymbol{\bar{z}} )}=   
    \bar{z}_i +\cfrac{\displaystyle\sum_{j=1}^p \left(\bar{z}_j - \bar{z}_i \right)\phi\left \{ \sqrt{n}(\bar{z}_i - \bar{z}_i)/h\right \}}
        {h^2\displaystyle\sum_{j=1}^p \phi\left \{ \sqrt{n}(\bar{z}_i - \bar{z}_i)/h\right \}}.
\end{equation}

Here, $(\hat{\mu}_{EB})_i$ denotes the $i$-th component of $\hat{\boldsymbol\mu}_{EB}$, and $h$ the bandwidth of the kernel estimator. Throughout our paper, we use $h = 1/\sqrt{\log p}$, with which good theoretical properties are known to hold (see \cite{brown2009nonparametric}). It is worth noting that both estimation methods provide the estimator of $\boldsymbol\mu^*$ solely from the sample mean $\bar{\boldsymbol z}$.

\subsection{Construction of discriminant rules}\label{sec: construction of discriminant rules}

First, recall that when the prior probability of group 1 and group 2 is provided, the Bayes discriminant rule $\delta$ has the form of 
\begin{equation}
    \delta(\boldsymbol{x}^{new}) = \left(\boldsymbol{x}^{new}-\frac{\boldsymbol{\mu}_{1}+\boldsymbol{\mu}_{2}}{2}\right)^\top \boldsymbol{\Sigma}^{-1}\left(\boldsymbol{\mu}_{2}-\boldsymbol{\mu}_{1}\right) - \log\cfrac{p_1}{p_2},
\end{equation}
where $p_1, p_2$ denotes the prior probability of each group. The only difference from \eqref{eqn:classification rule} is the last term, which denotes the prior odds.
In real data situations, when $n_1, n_2$ samples are obtained from group 1 and group 2, respectively, prior odds can be 
estimated as $n_1/n_2$. As discussed in \eqref{eqn: Fisher discriminant rule from decorrelated data}, we can write

\begin{equation}\label{eqn: classification rule oracle ver}
    \delta(\boldsymbol{x}^{new}) = \left(\boldsymbol{\mu}^*_{1}-\boldsymbol{\mu}^*_{2}\right)^\top \left(\boldsymbol{z}^{new}-\frac{\boldsymbol{\mu}^*_{1}+\boldsymbol{\mu}^*_{2}}{2}\right) - \log \frac{p_1}{p_2}.
\end{equation}

As we previously mentioned, we first estimate the precision matrix, decorrelate the data (including $\boldsymbol{x}_{new}$), and then estimate the decorrelated mean vector, $\mu^*$. 
For the classification, we would have two groups of datasets. 
In each group, we obtain the estimator for the precision matrix ($\hat{\Omega}_1$, $\hat{\Omega}_2$). Then, we use the pooled estimator as a final precision estimator. Namely,

$$\hat{\Sigma}^{-1} = \hat{\Omega} = \frac{(n_1-1)\hat{\Omega}_1 + (n_2-1)\hat{\Omega}_2}{n_1+n_2-2}.$$

For $\boldsymbol{X}_i \sim N_p(\boldsymbol{\mu}, \Omega^{-1})$, one can write

 $$
\hat{\Omega}^{1/2}\boldsymbol{X}_i \sim N_p(\hat{\Omega}^{1/2}\boldsymbol{\mu}, \hat{\Omega}^{1/2}\Omega^{-1}\hat{\Omega}^{1/2}), \;i=1, \cdots, n.
 $$ 
 
Assuming that $\hat{\Omega}$ estimates $\Omega$ properly, $\hat{\Omega}^{1/2}\Omega^{-1}\hat{\Omega}^{1/2}$ can be assumed to be similar to $I_p$. Therefore, we can apply NPEB and NPMLE methods under desirable settings, to finally get the estimator of $\hat{\Omega}^{1/2}\boldsymbol\mu$. Since we have two groups of data, we denote the decorrelated mean of group $i$ and its estimate as $\boldsymbol{\mu_i}^*$ and $\widehat{\boldsymbol{\mu_i}^*}$, respectively ($i = 1, 2$).

Following the procedure, one can obtain the practical version of the discriminant rule as

\begin{equation}\label{eqn:discriminant rule - mean each}
    \delta(\boldsymbol{x}^{new}) = \left( \widehat{\boldsymbol{\mu}^*_{1}}-\widehat{\boldsymbol{\mu}^*_{2}}\right)^\top \left( \hat{\Omega}^{1/2}\boldsymbol{x}^{new}-\frac{\widehat{\boldsymbol{\mu}^*_{1}}+\widehat{\boldsymbol{\mu}^*_{2}}}{2}\right) - \log \frac{p_1}{p_2}.
\end{equation}

Instead of estimating ${\boldsymbol\mu}_1^*$ and ${\boldsymbol\mu}_2^*$ separately, we can estimate $\boldsymbol{\mu}_1^* - \boldsymbol{\mu}_2^*$ at once. When $\boldsymbol{z}_1$ and $\boldsymbol{z}_2$ denotes the decorrelated sample mean (with true precision matrix) of the group $1$ and $2$, 

\begin{equation*}
    \boldsymbol{z}_1 \sim N_p({\boldsymbol\mu}_1^*, {n_1}^{-1}I_p), \;
    \boldsymbol{z}_2 \sim N_p({\boldsymbol\mu}_2^*, {n_2}^{-1}I_p).
\end{equation*}
Since $\boldsymbol{z}_1$, $\boldsymbol{z}_2$ are independent,
\begin{equation*}
    \boldsymbol{z}_1 - \boldsymbol{z}_2 \sim N_p\left({\boldsymbol\mu}_1^* - {\boldsymbol\mu}_2^*, \left(
    n_1^{-1} + n_2^{-1} \right)I_p\right).
\end{equation*}

Therefore, $\boldsymbol{\mu}_1 - \boldsymbol{\mu}_2 $ can be directly estimated from $\boldsymbol{z}_1 - \boldsymbol{z}_2 $ using NPEB and NPMLE method by multiplying and dividing the constant $a_n = (1/n_1 + 1/n_2)^{-1/2}$. Since the mean difference is considered to be the main key to separate two groups, we can simply plug in $\boldsymbol{z}_1, \boldsymbol{z}_2$ for $\boldsymbol{\mu}_1, \boldsymbol{\mu}_2$ in $\left(\boldsymbol{\mu}_1^* + \boldsymbol{\mu}_2^*\right)/2$ of \eqref{eqn: classification rule oracle ver}. Therefore, the classifier built on this method is written as

\begin{equation}\label{eqn:discriminant rule - mean diff}
    \delta(\boldsymbol{x}^{new}) = \left( \widehat{\boldsymbol{\mu}^*_{1}- \boldsymbol{\mu}^*_{2}}\right)^\top \left( \hat{\Omega}^{1/2}\boldsymbol{x}^{new}-\frac{\boldsymbol{z}^*_{1}+\boldsymbol{z}^*_{2}}{2}\right) - \log \frac{p_1}{p_2}.
\end{equation}

In section \ref{sec:Realdata}, we compare (\ref{eqn:discriminant rule - mean each}) and (\ref{eqn:discriminant rule - mean diff})'s performance in real data. We noticed that two discriminant rules show similar performance when same estimation strategies are used. Note that (\ref{eqn:discriminant rule - mean each}) and (\ref{eqn:discriminant rule - mean diff}) become identical
when one use the estimator of $\boldsymbol\mu_1^*$, $\boldsymbol\mu_2^*$, and $\boldsymbol\mu_1^* - \boldsymbol\mu_2^*$ based on sample means.
Therefore, we use the discriminant rule (\ref{eqn:discriminant rule - mean diff}) for the simulation and theoretical analysis in section \ref{sec: Asymp}. We denote `NPEB1', `NPMLE1' for the discriminant rule using NPEB, NPMLE method in (\ref{eqn:discriminant rule - mean diff}) and `NPEB2', `NPMLE2' for (\ref{eqn:discriminant rule - mean each}).


    
\section{Asymptotic results of the NPEB method }\label{sec: Asymp}
In this section, we present the asymptotic result 
for the discriminant rule with the NPEB method. 
\cite{wang2018dimension} studied the performance of the discriminant rule (\ref{eqn:discriminant rule - mean diff}) when using sample mean.
\cite{park2022high} studied the performance of (\ref{eqn:discriminant rule - mean diff}) using the mean vector estimator with NPMLE in \eqref{eqn:GMLE}  
and showed a 
region for the parameters 
related to the strength of mean values and 
the level of sparsity such that 
the error rate is asymptotically 0 as the dimension increases. 
We provide an analogous result for the NPEB method 
under the setting in \cite{park2022high} 
and compare the result with those for NPMLE, hard threshold, and naive Bayes rule 
{(which uses the sample mean for \eqref{eqn:discriminant rule - mean diff}).} 

As in \cite{park2022high} and \cite{wang2018dimension}, we assume that the precision matrix is consistently estimated so that 
the data are decorrelated in an appropriate way, 
for example, glasso method provides 
the uniformly consistent estimators of all components in the precision matrix. 
We emphasize the effects of
mean vector estimations after data are decorrelated.   

Under (\ref{eqn:discriminant rule - mean diff}), assuming  $\Sigma = I_p$ or the precision matrix is consistently estimated,  
 the misclassification error rate of a discriminant rule $\delta$ 
 is written as 
\begin{eqnarray*} 
    P\bigg(\delta(x^{new})>0 \,\bigg|\, x^{new} \in \text{group 2} \bigg) &=&
    P\bigg( (z^{new}-\frac{\bar{z}_1+\bar{z}_2}{2})^T \hat\mu_D^* > 0 \, \bigg | \, z^{new} \sim N_p(\mu_2^*, I) \bigg)\\
    &=& \Phi \left\{-\frac{1}{2}\left(\frac{\mu_D^{*T}\hat\mu_D^*}{\lVert\hat\mu_D^*\rVert_2} + \frac{(\bar{z}_1+\bar{z}_2-(\mu_1^*+\mu_2^*))^T\hat\mu_D^*}{\lVert\hat\mu_D^*\rVert_2} \right) \right\},
\end{eqnarray*}
where $\mu_D^* = \mu_1^*-\mu_2^*$, $\hat\mu_D^* = \widehat{\mu_1^*-\mu_2^*}$. 
Since 
$$\bar{z}_1+\bar{z}_2-(\mu_1^*+\mu_2^*) \sim N\left(0, a_n^{-2} I_p\right),$$ when $n_1$ and $n_2$ are fixed, $\bar{z}_1+\bar{z}_2-(\mu_1^*+\mu_2^*) = O_p(1)$ holds. 
Therefore, from the Cauchy-Schwartz inequality,
we have $$\cfrac{\left\{\bar{z}_1+\bar{z}_2-(\mu_1^*+\mu_2^*)\right\}^T\hat\mu_D^*}{\lVert\hat\mu_D^*\rVert_2} = O_p(1).$$ Thus, as $p \to \infty$, 
we have the result that  
$P\bigg(\delta(x^{new})>0 \, \bigg| \, x^{new} \in \text{group 2} \bigg) \to 0$ is equivalent to 
\begin{equation}\label{eqn:perfect condition}
    V := \frac{\mu_D^{*T}\hat\mu_D^*}{\lVert\hat\mu_D^*\rVert_2} \overset{\mathrm{p}}{\to} \infty. 
\end{equation}

This implies that if $V \overset{\mathrm{p}}{\to} \infty$ as $p \to \infty$, a given decision rule $\delta$ has the error rate $0$ asymptotically. However, as $p/n \to \infty$, it is widely known that $\delta$ has the error rate $1/2$ asymptotically, which matches the intuition (see \cite{bickel2004some}). Therefore, we consider the situation in which both the number of meaningful components (say, the number of `signals') and the degree of signal increase in a positive power of $p$.

{To implement 
different levels of sparsity 
of signals in high dimension as $p\rightarrow \infty$, 
we parametrize   
the number of nonzero mean differences and their  sizes  
which are commonly used in 
the study of sparse signals in high dimensional data analysis. 
{See 
\cite{park2022high}  
and \cite{dicker2016High} for the examples.}
}
More specifically, we set $\mu_1^* = (0, \cdots, 0)$ and  $\mu_2^* = (\mathbf{\Delta}_l^T, \mathbf{0}_{p-l}^T)$ 
where   $\mathbf{\Delta}_l$ is the $l$ dimensional column vector with 
all components $\Delta$. Then we assign
\begin{equation}\label{eqn: assumption for theoretical analysis}
    \Delta = p^b, l = [p^a] 
\end{equation}
for $b>0$ and $0<a<1$. 
$\Delta$  and $l$ represent 
the strength of the signal for each component and 
the level of sparsity, respectively.
{As $p$ increases, the sparsity level $l=[p^a]$ and the strength of signal $\Delta$ are represented as functions of $p$. This setting was used in the theoretical study by 
\cite{park2022high} and the simulation experiments by \cite{dicker2016High}. This setting will include various forms, including cases where $\Delta$ converges to infinity or to zero, in other words, the strength of the signal $\Delta$ is very large or very small.  This will be accompanied by the number of variables $l$ resulting in comparisons of various methodologies for various cases.}

{We first set the range of interest for the constants a and b since if either $a$ or $b$ is too large, the two groups can be easily separated regardless of the method used for classification. Therefore, as mentioned in \cite{park2022high}, we want to compare various methods only in the $(a,b)$ range as follows: 
\begin{eqnarray*}
 (a,b) \in {\cal R} \equiv  (T_1 \cup T_2 )^c
 \label{eqn:T1T2}
\end{eqnarray*}
where 
$T_1=\left\{(a, b) \in \mathbb{R}^2, a+2 b \geq 1 / 2, 0<a \leq 0.5\right\}$ and $T_2=\left\{(a, b) \in \mathbb{R}^2: 0.5<a \leq 1\right.$, $\left.b>0\right\}$.
In $T_1$ or $T_2$, the combination of 
$\Delta$ and $l$ provides 
large strengths of signals leading to clear 
separation of two groups 
by various methods. 
See more details in \cite{park2022high}.  
We consider  the outside of $T_1$ and $T_2$ which is 
$(T_1\cup T_2)^c$ 
in comparison of various methods.  
}

Note that if a discriminant rule has the property of \eqref{eqn:perfect condition} in a wider range of $(a,b)$ than another one, then the first discriminant rule has superiority to the other one 
{asymptotically}.    
\cite{park2022high} derived some conditions 
{on $(a, b)$}
under which the classifier (\ref{eqn:discriminant rule - mean diff}) using the NPMLE, the sample mean, and hard threshold method achieves \eqref{eqn:perfect condition}. \cite{park2022high} showed that the discriminant rule with NPMLE has an advantage over 
naive Bayes rule and hard threshold in that 
\eqref{eqn:perfect condition} of NPMLE method is obtained 
in a wider range than the other two methods.


Here, we suggest the range of parameters to achieve (\ref{eqn:perfect condition}) in the NPEB method and compare this with the regions of NPMLE, the sample mean(SM), and hard threshold(Hard) methodologies in the following theorem. We first define $\mathcal{R}_{NPEB}, \mathcal{R}_{NPMLE}, \mathcal{R}_{SM}$ and $\mathcal{R}_{Hard}$ as the areas of $(a, b)$ satisfying (\ref{eqn:perfect condition}) under the corresponding mean estimation methods. 



\begin{figure}[ht]
    \centering
    \includegraphics[scale = 0.9]{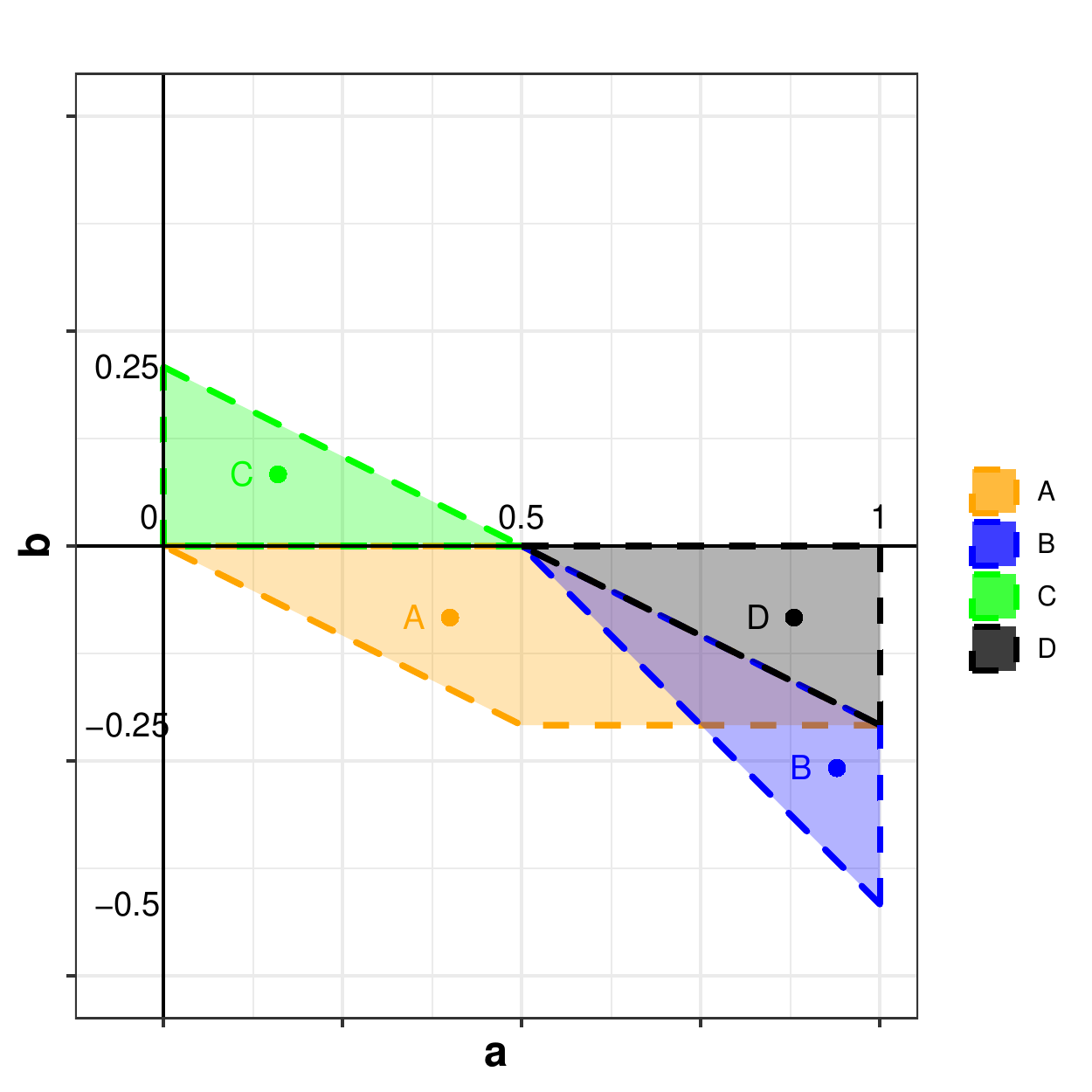}
    \caption{The area of $(a, b)$ under the setting (\ref{eqn: assumption for theoretical analysis})}
    \label{fig:ConvArea_ab}
\end{figure}

\begin{theorem}

Suppose $\mu_1 = (0, \cdots, 0)$ and  $\mu_2 = (\mathbf{\Delta}_l^T, \mathbf{0}_{p-l}^T)$ 
where  $\Delta = p^b, l = [p^a] $. Then, $\mathcal{R}_{NPEB}, \mathcal{R}_{NPMLE}$ cover both $\mathcal{R}_{SM}$ and $\mathcal{R}_{Hard}$. Specifically, 
\begin{align*}
    \mathcal{R}_{Hard} = C,& \, \mathcal{R}_{SM} = D,\, \\
    \mathcal{R}_{NPEB} \supset A \cup C \cup D,& \, \mathcal{R}_{NPMLE} \supset B \cup C \cup D,   
\end{align*}

where

\begin{align*}
    A &= \left\{(a, b)\in \mathbb{R}^2,\, 0 \le a + 2b \le 1/2: -0.25 < b < 0 \textmd{ and } 0<a<1 \right\},   \\
    B &= \left\{(a, b)\in \mathbb{R}^2,\, a + b \ge 1/2: -0.5<b<0 \textmd{ and } 0.5<a<1  \right\},    \\
    C &= \left\{(a, b)\in \mathbb{R}^2,\, a + 2b \le 1/2: 0<b<0.25 \textmd{ and } 0<a<0.5  \right\}, \\
    D &= \left\{(a, b)\in \mathbb{R}^2,\, a + 2b \ge 1/2: -0.25<b<0 \textmd{ and } 0.5<a<1  \right\}.
\end{align*}

\label{thm:NPEB}
\end{theorem}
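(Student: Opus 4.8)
The plan is to pass to the decorrelated unit-variance coordinates. Setting $a_n=(n_1^{-1}+n_2^{-1})^{-1/2}$ and $W_j=a_n(\bar z_{1j}-\bar z_{2j})$, the $W_j$ are independent with $W_j\sim N(\theta_j,1)$, where $\theta_j=-a_n\Delta$ for $j\le l$ and $\theta_j=0$ for $j>l$, and the NPEB estimate satisfies $\widehat{\mu^*_{D,j}}=\hat\theta_j/a_n$. Rearranging \eqref{eqn: NPEB expression} puts the estimator in the Brown--Greenshtein Tweedie form $\hat\theta_j=W_j+\hat f_h'(W_j)/\hat f_h(W_j)$, with $\hat f_h$ the normal-kernel density estimate of the law of $W$ at bandwidth $h=1/\sqrt{\log p}$. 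Substituting into the characterization of $V$ from \eqref{eqn:perfect condition},
\[
V=\frac{\sum_{j\le l}\theta_j\hat\theta_j}{a_n\sqrt{\;\sum_{j\le l}\hat\theta_j^2+\sum_{j>l}\hat\theta_j^2\;}},
\]
so the theorem reduces to understanding, in probability, the \emph{signal alignment} $S_1:=\sum_{j\le l}\theta_j\hat\theta_j$, the \emph{signal energy} $S_2:=\sum_{j\le l}\hat\theta_j^2$, and the \emph{null leakage} $N:=\sum_{j>l}\hat\theta_j^2$; since $a_n$ is a fixed constant, $V\overset{\mathrm{p}}{\to}\infty$ iff $S_1/\sqrt{S_2+N}\overset{\mathrm{p}}{\to}\infty$. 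As $\mathcal R_{Hard}=C$ and $\mathcal R_{SM}=D$ are already available from \cite{park2022high}, the cases $C$ and $D$ mainly serve as a consistency check, and the genuinely new content is $A\subseteq\mathcal R_{NPEB}$.

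\textbf{Main estimates.} For the signal coordinates I would show that near $W_j=\theta_j+g_j$ the quantities $\hat f_h(W_j)$ and $\hat f_h'(W_j)$ concentrate around the $\phi_h$-smoothed marginal $g^{*}*\phi_h$ and its derivative, so $\hat\theta_j$ concentrates around its Tweedie value; this splits into two regimes. When $b>0$, the $l=[p^a]$ signal means form a cluster separated from the bulk by $a_n p^b\gg\sqrt{\log p}$, the density at $W_j$ is dominated by the other signal points, and $\hat\theta_j\approx-a_n\Delta$, so $S_1$ and $S_2$ are both of order $a_n^2 p^{a+2b}$. When $b<0$, the signal cluster merges into the $N(0,1)$ bulk, so a Taylor expansion of $g^{*}*\phi_h$ at the origin shows $\hat\theta_j$ equals a small deterministic shrinkage of $W_j$ plus a $g_j$-fluctuation; the systematic signal part survives only after summing over the $l$ coordinates, which works provided $a+2b>0$ and gives $S_1$ of order $a_n^2 p^{a+2b}$ up to a logarithmic factor. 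For the null coordinates I would decompose each $\hat\theta_j=W_j+\hat f_h'(W_j)/\hat f_h(W_j)$ into the shrinkage bias of the kernel estimate plus a stochastic part, bound the bulk of the $p-l$ nulls by Gaussian concentration, and treat the largest order statistics $|W_j|\approx\sqrt{2\log p}$ separately, with the aim of an upper bound on $N$ that is small relative to $S_1$ on each of $A$, $C$, $D$. Feeding $S_1,S_2,N$ into the display for $V$ and comparing exponents of $p$ then yields $V\overset{\mathrm{p}}{\to}\infty$ exactly on $A\cup C\cup D$. The companion bound $\mathcal R_{NPMLE}\supseteq B\cup C\cup D$ runs the same way; the one change is that the NPMLE marginal $\int\phi(\cdot-v)\,d\hat G(v)$ has no kernel-smoothing bias, which is what shifts the extra region from $A$ to $B$ in the less sparse range $a>1/2$.

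\textbf{Main obstacle.} The hard part is the upper bound on the null leakage $N=\sum_{j>l}\hat\theta_j^2$. There are $p-l\approx p$ null coordinates, yet on regions $A$ and $C$ the alignment $S_1$ is only of polynomial order $p^{a+2b}$ with $a+2b$ possibly close to $0$, so neither a crude uniform bound on $|\hat\theta_j|$ nor the standard $o(p)$-type compound-risk bound for the Brown--Greenshtein rule suffices --- one genuinely needs $N=o_p\!\big(p^{a+2b}\big)$ there. That demands a sharp handle on the bias and the fluctuations of the normal-kernel density estimate evaluated at the data points, plus a careful accounting of the few extreme $W_j$ whose empirical-Bayes corrections are not small, all with a single bandwidth $h=1/\sqrt{\log p}$ that cannot be tuned to the unknown $(a,b)$. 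Making this leakage bound tight simultaneously over all of $A\cup C\cup D$ is where the real effort goes.
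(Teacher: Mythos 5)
Your overall architecture matches the paper's: reduce to the Tweedie form of the Brown--Greenshtein estimator in decorrelated coordinates, lower-bound the alignment $\sum_{j\le l}\theta_j\hat\theta_j$ by roughly $p^{a+2b}$ (up to the $h^2/(1+h^2)\asymp 1/\log p$ shrinkage factor when $b<0$), upper-bound $\lVert\widehat{\mu_D^*}\rVert_2$, and split on the sign of $b$. However, there are two genuine gaps. First, your treatment of the denominator is both harder than necessary and mis-calibrated. Because $V=S_1/\sqrt{S_2+N}$ carries a square root, what is needed is $S_2+N=o_p\bigl(p^{2(a+2b)}(\log p)^{-2}\bigr)$, not the much stronger $N=o_p(p^{a+2b})$ you assert is "genuinely needed"; and the tool you dismiss is exactly the one the paper uses successfully. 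Equation (45) of \cite{brown2009nonparametric} gives the \emph{sub-polynomial} compound regret bound $E\sum_i(\delta_i-\widehat{\mu_{D,i}}^{EB})^2=o(p^{\epsilon})$ for every $\epsilon>0$ (not merely $o(p)$), so writing $\lVert\widehat{\mu}_D^{EB}\rVert_2^2\le 2\sum_i(\delta_i-\widehat{\mu_{D,i}}^{EB})^2+2\sum_i\delta_i^2$ and computing $E\sum_i\delta_i^2$ directly for the oracle two-point-prior rule yields $O_p\{p^{\max(2a+2b-1,\epsilon')}\}$ for $b<0$ and $O_p(p^{a+2b})$ for $b>0$, which suffices on all of $A\cup C\cup D$ with no per-coordinate analysis of the null leakage, no bias/fluctuation decomposition of $\hat f_h$ at the nulls, and no separate treatment of extreme order statistics. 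Your proposed route for this step is not carried out and would be substantially more delicate than the one that works.

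Second, the numerator estimate --- the part you flag only as "I would show that $\hat f_h(W_j)$ and $\hat f_h'(W_j)$ concentrate" --- is where the real technical content of the paper lies, and it is not supplied. The difficulty is that the $\hat\theta_j$ are all mutually dependent (each depends on every $W_k$), so one cannot directly apply a law of large numbers to $\sum_{j\le l}\theta_j\hat\theta_j$. The paper resolves this by conditioning on $Z_i$, applying Hoeffding's inequality to the kernel sums over $j\ne i$ (the quantities $A_i,B_i,C_i,D_i$), intersecting the resulting high-probability events with a uniform bound $\max_j|Z_j|+a_n\Delta<\log p$, and then constructing explicit \emph{independent} minorants $W_i\le\widehat{\mu_{D,i}}^{EB}$ to which a WLLN for triangular arrays applies (Lemmas 1--4 of the appendix); the unconditional bound $\widehat{\mu_{D,i}}^{EB}\ge V_{2i}$ via "weighted mean exceeds the minimum" is what handles the complementary event and the $b>0$ case. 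Without some device of this kind your concentration claim for the signal coordinates, and hence the lower bound on $S_1$, does not follow from what you have written.
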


\begin{proof}
See Appendix.
\end{proof}



Note that $\mathcal{R}_{NPEB}$ and $\mathcal{R}_{NPMLE}$ can be enlarged further, while $\mathcal{R}_{SM}$ and $\mathcal{R}_{Hard}$ exactly matches with $D$ and $C$. The visual illustration of each region is in figure \ref{fig:ConvArea_ab}. We can see that the NPEB method covers all the areas of the naive Bayes rule and hard threshold method. 
On the other hand, neither $\mathcal{R}_{NPEB}$ nor $\mathcal{R}_{NPMLE}$
is included in each other. Both methods are expected to work well 
in both sparse and dense cases while the hard threshold method and naive Bayes rule 
are designed for only sparse and dense cases, respectively. 

We provide numerical studies in section \ref{sec:Simulations} and real data analysis in \ref{sec:Realdata} to compare these methods, 
and see that there doesn't seem to be any tendency that 
either of NPMLE and NPEB dominates the other method.

\section{Simulations}
\label{sec:Simulations}
In this section, we provide simulation results for 
various combinations of mean vectors and covariance matrices.  
We divided simulation settings according to the sparseness and  denseness  of the precision matrix and  those  of the difference of mean vectors. 

Under the assumption of multivariate normal distribution, $\Omega_{ij} = 0$ means that $X_i$ and $X_j$ are conditionally independent. To verify the strength of glasso and LAM methods separately, we divided simulation settings according to the sparseness of the precision matrix or covariance matrix. 
To compare the mean vector estimation method individually, we assumed the simulation setting as follows. We set $\mu^T_1 = (0, \cdots, 0)$, $\mu^T_2 = (\mathbf{\Delta}_l^T, 0_{p-l}^T)$ 
where  $\mathbf{\Delta}_l$ 
is the $l$ dimensional column vector with all component $\Delta$ 
and $\mathbf{0}_{p-l}$ is the $(p-l)$ dimensional vector with all $0$ components.  
In some situation, we set $\mu_{2s} = \Sigma^{1/2}\mu_2 = (\mathbf{\Delta}_l^T, \mathbf{0}_{p-l}^T)$. Throughout the simulation, we set $p=500$ with the number of training data and test data as $n_{1, train} = n_{2, train} = 50$ and $n_{1, test} = n_{2, test} = 250$.

We consider the following 
simulation settings for various configurations of 
$l$ and $\Delta$  as well as different structure of $\Sigma$. 

\begin{enumerate}
    \item Setting 1: AR(1) structured precision matrix. We set $\Sigma^{-1}$ as $(\Sigma^{-1})_{ij} = \rho^{|i-j|}$. 
    
    -simulation 1-1: $\Delta = 3, l=20$ in $\mu_{2s}$ with $\rho = 0.8$.
    
    -simulation 1-2: $\Delta = 0.4, l=400$ in $\mu_{2s}$ with $\rho = 0.8$.
    
    \item Setting 2: Blocked AR(1) structured precision matrix. We set $\Sigma^{-1}$ as $ 
    \begin{bmatrix} 
 \Sigma_q^{-1}&  0\\  
 0&  I_{p-q}
\end{bmatrix}$, where $(\Sigma_q^{-1})_{ij} = \rho^{|i-j|}$ for $1\le i, j \le q$.

    -simulation 2-1: $\Delta = 3, l=20$ in $\mu_{2s}$ with $\rho = 0.9, q=50$.

    -simulation 2-2: $\Delta = 0.15, l=400$ in $\mu_{2s}$ with $\rho = 0.9, q=50$.
    
    \item Setting 3: Exchangeable precision matrix. We set $\Sigma^{-1}$ as $(\Sigma^{-1})_{ij}$ to have 1 as a diagonal components and $\rho$ as all off-diagonal elements.
    
    -simulation 3-1: $\Delta = 0.7, l=20$ in $\mu_{2}$ with $\rho = 0.3$.
    
    -simulation 3-2: $\Delta = 0.07, l=20$ in $\mu_{2s}$ with $\rho = 0.3$.
    
    \item Setting 4: Toeplitz covariance matrix. We set $\Sigma$ as $\Sigma_{ij} = 1\,/ \,(|i-j|+1)$.
    
    -simulation 4-1: $\Delta = 1, l=20$ in $\mu_{2}$.
    
    -simulation 4-2: $\Delta = 0.4, l=20$ in $\mu_{2s}$.
    
    \item Setting 5: Banded covariance matrix. We set $\Sigma$ as $\Sigma_{ij} = \max(1-|i-j| \,/\, 10, 0)$.
    
    -simulation 5-1: $\Delta = 0.7, l=20$ in $\mu_{2}$.
    
    -simulation 5-2: $\Delta = 0.6, l=20$ in $\mu_{2s}$.
\end{enumerate}

We present misclassification error rates in Table \ref{tab:sim1} to Table \ref{tab:sim5}, 
{in which the columns and rows denote the precision matrix and mean vector estimation methods, respectively. We also provide the error rates obtained when the true precision matrix is used instead of estimators, denoted as `Oracle.prec', to verify the effect of the precision matrix estimation. Additionally, we analyze the error rate with a precision matrix estimator which is diagonal and each entry is an inverse of the sample variance of each feature, denoted as 'IR'. We summarize the result as follows : }
\begin{itemize}
    \item  \textit{Impact of decorrelation of the LAM on the mean vector estimations:}\\ From all the results from the LAM and the IR, we noticed that mean vector estimation methods have a consistent impact on the error rates. Especially the SM method is improved by the NPMLE, and the NPEB  since the error rates  of the NPEB and the NPMLE methods are smaller than those of the SM.  
    \item  \textit{Impact of decorrelation using glasso on the mean vector estimations:}\\
    When the precision matrix is estimated by 
    {glasso}, the mean vector estimation itself didn't show a clear impact on the error rate. 
    {NPEB and NPMLE methods with the decorrelation using glasso method do not significantly improve the error rate of SM $\&$ glasso based discriminant rule}.  
    \item \textit{Impact of glasso and LAM:} \\
 Depending on the structure of the covariance matrix or  precision matrix, 
 the glasso and the LAM are designed for sparse and general structures, respectively. 
 For example, in simulation 2-1, the glasso method tends to produce 
 smaller error rates than those from the LAM method. 
 Simulation 2-1 shows the sparse structure of the precision matrix 
 since the AR(1) structure 
 {is 
 the banded matrix.}
 On the other hand,  in situation 3-1, using the precision matrix with all the same off-diagonal terms, the LAM method performs better than glasso method since the precision matrix in simulation 3-1 is a dense matrix.  
 These two simulations coincide with 
 the effect of the glasso and the LAM on the different structures of the precision matrix, such as sparsity and denseness.  

    \item \textit{No uniform dominance among precision estimation strategies:}\\
    Comparing the results in simulations 2-1 and 2-2, one can notice that even though two situations are generated with the same precision matrix, a tendency in error rates is dissimilar. In situation 2-1, the linear discriminant rule using the graphical lasso method shows the lowest error rate with every mean estimation strategy, which is completely opposite to simulation 2-2. 
    Therefore, the dominant precision estimation strategy for the linear discriminant rule cannot be determined depending on the precision matrix only.
    \item \textit{Comparison with theoretical result:}\\
    The theoretical result can elucidate the significant dominance of the NPMLE and NPEB method over the SM method in simulations 1-2 and 2-2 under the IR and LAM method for the estimation of the precision matrix. According to theorem \ref{thm:NPEB}, when there is a dense signal with low signal intensity ($5/6 < a < 1,\, b<0$), the NPMLE and NPEB methods ensure a wider range of $b$ in which the asymptotic error rate is zero compared to SM method. One can check that simulation 1-2 and 2-2 nearly falls into this area.
    
\item \textit{Impact of mean vector estimation by shrinkage:} \\
From the results in simulations 3-1 and 3-2, we can observe conspicuous improvements in the SM method in error rates by using the NPMLE and NPEB methods 
for any given estimation of the precision matrix.  
When the sparse signals have an intensity that is proportional to the positive power of $p$, namely, when $0<a<1/2$, $0<b<1/4$, and $a + 2b < 1/2$ holds, the linear discriminant rules  built from the SM method are known to asymptotically act as random guessing, while the discriminant rules using NPEB and NPMLE have the error rate which asymptotically converges to zero.

\end{itemize}

\begin{table}[p]
\centering
\resizebox{\columnwidth}{!}{%
\begin{tabular}{c|cccc|cccc|}
 & \multicolumn{4}{c|}{Simulation 1-1} & \multicolumn{4}{c|}{Simulation 1-2} \\ \cline{2-9} 
 & \multicolumn{1}{c|}{Oracle.prec} & \multicolumn{1}{c|}{glasso} & \multicolumn{1}{c|}{LAM} & IR & \multicolumn{1}{c|}{Oracle.prec} & \multicolumn{1}{c|}{glasso} & \multicolumn{1}{c|}{LAM} & IR \\ \hline
NPEB1 & \multicolumn{1}{c|}{\begin{tabular}[c]{@{}c@{}}0.0000\\ (0.0000)\end{tabular}} & \multicolumn{1}{c|}{\begin{tabular}[c]{@{}c@{}}0.1374\\ (0.0156)\end{tabular}} & \multicolumn{1}{c|}{\begin{tabular}[c]{@{}c@{}}0.0468\\ (0.0093)\end{tabular}} & \begin{tabular}[c]{@{}c@{}}0.0644\\ (0.0113)\end{tabular} & \multicolumn{1}{c|}{\begin{tabular}[c]{@{}c@{}}0.0001\\ (0.0004)\end{tabular}} & \multicolumn{1}{c|}{\begin{tabular}[c]{@{}c@{}}0.4079\\ (0.0215)\end{tabular}} & \multicolumn{1}{c|}{\begin{tabular}[c]{@{}c@{}}0.3236\\ (0.0200)\end{tabular}} & \begin{tabular}[c]{@{}c@{}}0.3053\\ (0.0201)\end{tabular} \\
NPEB2 & \multicolumn{1}{c|}{\begin{tabular}[c]{@{}c@{}}0.0000\\ (0.0000)\end{tabular}} & \multicolumn{1}{c|}{\begin{tabular}[c]{@{}c@{}}0.1375\\ (0.0158)\end{tabular}} & \multicolumn{1}{c|}{\begin{tabular}[c]{@{}c@{}}0.1139\\ (0.0144)\end{tabular}} & \begin{tabular}[c]{@{}c@{}}0.1146\\ (0.0146)\end{tabular} & \multicolumn{1}{c|}{\begin{tabular}[c]{@{}c@{}}0.0001\\ (0.0004)\end{tabular}} & \multicolumn{1}{c|}{\begin{tabular}[c]{@{}c@{}}0.4084\\ (0.0219)\end{tabular}} & \multicolumn{1}{c|}{\begin{tabular}[c]{@{}c@{}}0.3099\\ (0.0206)\end{tabular}} & \begin{tabular}[c]{@{}c@{}}0.2840\\ (0.0196)\end{tabular} \\
NPMLE1 & \multicolumn{1}{c|}{\begin{tabular}[c]{@{}c@{}}0.0000\\ (0.0000)\end{tabular}} & \multicolumn{1}{c|}{\begin{tabular}[c]{@{}c@{}}0.1370\\ (0.0157)\end{tabular}} & \multicolumn{1}{c|}{\begin{tabular}[c]{@{}c@{}}0.0324\\ (0.0078)\end{tabular}} & \begin{tabular}[c]{@{}c@{}}0.0520\\ (0.0101)\end{tabular} & \multicolumn{1}{c|}{\begin{tabular}[c]{@{}c@{}}0.0000\\ (0.0000)\end{tabular}} & \multicolumn{1}{c|}{\begin{tabular}[c]{@{}c@{}}0.4073\\ (0.0217)\end{tabular}} & \multicolumn{1}{c|}{\begin{tabular}[c]{@{}c@{}}0.2288\\ (0.0183)\end{tabular}} & \begin{tabular}[c]{@{}c@{}}0.1422\\ (0.0162)\end{tabular} \\
NPMLE2 & \multicolumn{1}{c|}{\begin{tabular}[c]{@{}c@{}}0.0000\\ (0.0000)\end{tabular}} & \multicolumn{1}{c|}{\begin{tabular}[c]{@{}c@{}}0.1375\\ (0.0156)\end{tabular}} & \multicolumn{1}{c|}{\begin{tabular}[c]{@{}c@{}}0.0939\\ (0.0130)\end{tabular}} & \begin{tabular}[c]{@{}c@{}}0.0988\\ (0.0138)\end{tabular} & \multicolumn{1}{c|}{\begin{tabular}[c]{@{}c@{}}0.0000\\ (0.0000)\end{tabular}} & \multicolumn{1}{c|}{\begin{tabular}[c]{@{}c@{}}0.4079\\ (0.0218)\end{tabular}} & \multicolumn{1}{c|}{\begin{tabular}[c]{@{}c@{}}0.1379\\ (0.0148)\end{tabular}} & \begin{tabular}[c]{@{}c@{}}0.0489\\ (0.0099)\end{tabular} \\
SM & \multicolumn{1}{c|}{\begin{tabular}[c]{@{}c@{}}0.0000\\ (0.0000)\end{tabular}} & \multicolumn{1}{c|}{\begin{tabular}[c]{@{}c@{}}0.1393\\ (0.0157)\end{tabular}} & \multicolumn{1}{c|}{\begin{tabular}[c]{@{}c@{}}0.1945\\ (0.0178)\end{tabular}} & \begin{tabular}[c]{@{}c@{}}0.1854\\ (0.0177)\end{tabular} & \multicolumn{1}{c|}{\begin{tabular}[c]{@{}c@{}}0.0003\\ (0.0007)\end{tabular}} & \multicolumn{1}{c|}{\begin{tabular}[c]{@{}c@{}}0.4081\\ (0.0220)\end{tabular}} & \multicolumn{1}{c|}{\begin{tabular}[c]{@{}c@{}}0.4352\\ (0.0213)\end{tabular}} & \begin{tabular}[c]{@{}c@{}}0.4344\\ (0.0212)\end{tabular} \\ \hline
\end{tabular}%
}
\caption{Simulation 1 result}
\label{tab:sim1}
\end{table}

\begin{table}[p]
\centering
\resizebox{\columnwidth}{!}{%
\begin{tabular}{c|cccc|cccc|}
 & \multicolumn{4}{c|}{Simulation 2-1} & \multicolumn{4}{c|}{Simulation 2-2} \\ \cline{2-9} 
 & \multicolumn{1}{c|}{Oracle.prec} & \multicolumn{1}{c|}{glasso} & \multicolumn{1}{c|}{LAM} & IR & \multicolumn{1}{c|}{Oracle.prec} & \multicolumn{1}{c|}{glasso} & \multicolumn{1}{c|}{LAM} & IR \\ \hline
NPEB1 & \multicolumn{1}{c|}{\begin{tabular}[c]{@{}c@{}}0.0000\\ (0.0000)\end{tabular}} & \multicolumn{1}{c|}{\begin{tabular}[c]{@{}c@{}}0.0690\\ (0.0115)\end{tabular}} & \multicolumn{1}{c|}{\begin{tabular}[c]{@{}c@{}}0.1235\\ (0.0147)\end{tabular}} & \begin{tabular}[c]{@{}c@{}}0.1692\\ (0.0172)\end{tabular} & \multicolumn{1}{c|}{\begin{tabular}[c]{@{}c@{}}0.0952\\ (0.0132)\end{tabular}} & \multicolumn{1}{c|}{\begin{tabular}[c]{@{}c@{}}0.3184\\ (0.0211)\end{tabular}} & \multicolumn{1}{c|}{\begin{tabular}[c]{@{}c@{}}0.1997\\ (0.0179)\end{tabular}} & \begin{tabular}[c]{@{}c@{}}0.1205\\ (0.0136)\end{tabular} \\
NPEB2 & \multicolumn{1}{c|}{\begin{tabular}[c]{@{}c@{}}0.0000\\ (0.0000)\end{tabular}} & \multicolumn{1}{c|}{\begin{tabular}[c]{@{}c@{}}0.0686\\ (0.0115)\end{tabular}} & \multicolumn{1}{c|}{\begin{tabular}[c]{@{}c@{}}0.1522\\ (0.0156)\end{tabular}} & \begin{tabular}[c]{@{}c@{}}0.1692\\ (0.0169)\end{tabular} & \multicolumn{1}{c|}{\begin{tabular}[c]{@{}c@{}}0.0946\\ (0.0128)\end{tabular}} & \multicolumn{1}{c|}{\begin{tabular}[c]{@{}c@{}}0.3191\\ (0.0206)\end{tabular}} & \multicolumn{1}{c|}{\begin{tabular}[c]{@{}c@{}}0.1708\\ (0.0162)\end{tabular}} & \begin{tabular}[c]{@{}c@{}}0.1168\\ (0.0139)\end{tabular} \\
NPMLE1 & \multicolumn{1}{c|}{\begin{tabular}[c]{@{}c@{}}0.0000\\ (0.0000)\end{tabular}} & \multicolumn{1}{c|}{\begin{tabular}[c]{@{}c@{}}0.0687\\ (0.0116)\end{tabular}} & \multicolumn{1}{c|}{\begin{tabular}[c]{@{}c@{}}0.1199\\ (0.0148)\end{tabular}} & \begin{tabular}[c]{@{}c@{}}0.1609\\ (0.0164)\end{tabular} & \multicolumn{1}{c|}{\begin{tabular}[c]{@{}c@{}}0.0868\\ (0.0126)\end{tabular}} & \multicolumn{1}{c|}{\begin{tabular}[c]{@{}c@{}}0.3196\\ (0.0211)\end{tabular}} & \multicolumn{1}{c|}{\begin{tabular}[c]{@{}c@{}}0.2036\\ (0.0167)\end{tabular}} & \begin{tabular}[c]{@{}c@{}}0.1112\\ (0.0139)\end{tabular} \\
NPMLE2 & \multicolumn{1}{c|}{\begin{tabular}[c]{@{}c@{}}0.0000\\ (0.0000)\end{tabular}} & \multicolumn{1}{c|}{\begin{tabular}[c]{@{}c@{}}0.0688\\ (0.0115)\end{tabular}} & \multicolumn{1}{c|}{\begin{tabular}[c]{@{}c@{}}0.1534\\ (0.0164)\end{tabular}} & \begin{tabular}[c]{@{}c@{}}0.1665\\ (0.0167)\end{tabular} & \multicolumn{1}{c|}{\begin{tabular}[c]{@{}c@{}}0.0883\\ (0.0131)\end{tabular}} & \multicolumn{1}{c|}{\begin{tabular}[c]{@{}c@{}}0.3192\\ (0.0211)\end{tabular}} & \multicolumn{1}{c|}{\begin{tabular}[c]{@{}c@{}}0.2197\\ (0.0189)\end{tabular}} & \begin{tabular}[c]{@{}c@{}}0.1032\\ (0.0135)\end{tabular} \\
SM & \multicolumn{1}{c|}{\begin{tabular}[c]{@{}c@{}}0.0000\\ (0.0000)\end{tabular}} & \multicolumn{1}{c|}{\begin{tabular}[c]{@{}c@{}}0.0708\\ (0.0116)\end{tabular}} & \multicolumn{1}{c|}{\begin{tabular}[c]{@{}c@{}}0.1665\\ (0.0157)\end{tabular}} & \begin{tabular}[c]{@{}c@{}}0.2682\\ (0.0198)\end{tabular} & \multicolumn{1}{c|}{\begin{tabular}[c]{@{}c@{}}0.2002\\ (0.0181)\end{tabular}} & \multicolumn{1}{c|}{\begin{tabular}[c]{@{}c@{}}0.3208\\ (0.0208)\end{tabular}} & \multicolumn{1}{c|}{\begin{tabular}[c]{@{}c@{}}0.2957\\ (0.0203)\end{tabular}} & \begin{tabular}[c]{@{}c@{}}0.2253\\ (0.0190)\end{tabular} \\ \hline
\end{tabular}%
}
\caption{Simulation 2 result}
\label{tab:sim2}
\end{table}

\begin{table}[p]
\centering
\resizebox{\columnwidth}{!}{%
\begin{tabular}{c|cccc|cccc|}
 & \multicolumn{4}{c|}{Simulation 3-1} & \multicolumn{4}{c|}{Simulation 3-2} \\ \cline{2-9} 
 & \multicolumn{1}{c|}{Oracle.prec} & \multicolumn{1}{c|}{glasso} & \multicolumn{1}{c|}{LAM} & IR & \multicolumn{1}{c|}{Oracle.prec} & \multicolumn{1}{c|}{glasso} & \multicolumn{1}{c|}{LAM} & IR \\ \hline
NPEB1 & \multicolumn{1}{c|}{\begin{tabular}[c]{@{}c@{}}0.0000\\ (0.0000)\end{tabular}} & \multicolumn{1}{c|}{\begin{tabular}[c]{@{}c@{}}0.2609\\ (0.0194)\end{tabular}} & \multicolumn{1}{c|}{\begin{tabular}[c]{@{}c@{}}0.1285\\ (0.0145)\end{tabular}} & \begin{tabular}[c]{@{}c@{}}0.1291\\ (0.0144)\end{tabular} & \multicolumn{1}{c|}{\begin{tabular}[c]{@{}c@{}}0.0720\\ (0.0112)\end{tabular}} & \multicolumn{1}{c|}{\begin{tabular}[c]{@{}c@{}}0.2042\\ (0.0170)\end{tabular}} & \multicolumn{1}{c|}{\begin{tabular}[c]{@{}c@{}}0.0830\\ (0.0120)\end{tabular}} & \begin{tabular}[c]{@{}c@{}}0.0834\\ (0.0122)\end{tabular} \\
NPEB2 & \multicolumn{1}{c|}{\begin{tabular}[c]{@{}c@{}}0.0000\\ (0.0000)\end{tabular}} & \multicolumn{1}{c|}{\begin{tabular}[c]{@{}c@{}}0.2629\\ (0.0198)\end{tabular}} & \multicolumn{1}{c|}{\begin{tabular}[c]{@{}c@{}}0.1874\\ (0.0174)\end{tabular}} & \begin{tabular}[c]{@{}c@{}}0.1887\\ (0.0172)\end{tabular} & \multicolumn{1}{c|}{\begin{tabular}[c]{@{}c@{}}0.0992\\ (0.0130)\end{tabular}} & \multicolumn{1}{c|}{\begin{tabular}[c]{@{}c@{}}0.2048\\ (0.0173)\end{tabular}} & \multicolumn{1}{c|}{\begin{tabular}[c]{@{}c@{}}0.1140\\ (0.0139)\end{tabular}} & \begin{tabular}[c]{@{}c@{}}0.1179\\ (0.0143)\end{tabular} \\
NPMLE1 & \multicolumn{1}{c|}{\begin{tabular}[c]{@{}c@{}}0.0000\\ (0.0000)\end{tabular}} & \multicolumn{1}{c|}{\begin{tabular}[c]{@{}c@{}}0.2611\\ (0.0196)\end{tabular}} & \multicolumn{1}{c|}{\begin{tabular}[c]{@{}c@{}}0.1130\\ (0.0137)\end{tabular}} & \begin{tabular}[c]{@{}c@{}}0.1154\\ (0.0142)\end{tabular} & \multicolumn{1}{c|}{\begin{tabular}[c]{@{}c@{}}0.0648\\ (0.0110)\end{tabular}} & \multicolumn{1}{c|}{\begin{tabular}[c]{@{}c@{}}0.2047\\ (0.0168)\end{tabular}} & \multicolumn{1}{c|}{\begin{tabular}[c]{@{}c@{}}0.0724\\ (0.0113)\end{tabular}} & \begin{tabular}[c]{@{}c@{}}0.0728\\ (0.0114)\end{tabular} \\
NPMLE2 & \multicolumn{1}{c|}{\begin{tabular}[c]{@{}c@{}}0.0000\\ (0.0000)\end{tabular}} & \multicolumn{1}{c|}{\begin{tabular}[c]{@{}c@{}}0.2625\\ (0.0199)\end{tabular}} & \multicolumn{1}{c|}{\begin{tabular}[c]{@{}c@{}}0.1705\\ (0.0169)\end{tabular}} & \begin{tabular}[c]{@{}c@{}}0.1784\\ (0.0167)\end{tabular} & \multicolumn{1}{c|}{\begin{tabular}[c]{@{}c@{}}0.0897\\ (0.0126)\end{tabular}} & \multicolumn{1}{c|}{\begin{tabular}[c]{@{}c@{}}0.2041\\ (0.0173)\end{tabular}} & \multicolumn{1}{c|}{\begin{tabular}[c]{@{}c@{}}0.1019\\ (0.0129)\end{tabular}} & \begin{tabular}[c]{@{}c@{}}0.1094\\ (0.0141)\end{tabular} \\
SM & \multicolumn{1}{c|}{\begin{tabular}[c]{@{}c@{}}0.0000\\ (0.0000)\end{tabular}} & \multicolumn{1}{c|}{\begin{tabular}[c]{@{}c@{}}0.2656\\ (0.0200)\end{tabular}} & \multicolumn{1}{c|}{\begin{tabular}[c]{@{}c@{}}0.2550\\ (0.0194)\end{tabular}} & \begin{tabular}[c]{@{}c@{}}0.2503\\ (0.0190)\end{tabular} & \multicolumn{1}{c|}{\begin{tabular}[c]{@{}c@{}}0.1861\\ (0.0171)\end{tabular}} & \multicolumn{1}{c|}{\begin{tabular}[c]{@{}c@{}}0.2088\\ (0.0172)\end{tabular}} & \multicolumn{1}{c|}{\begin{tabular}[c]{@{}c@{}}0.1977\\ (0.0173)\end{tabular}} & \begin{tabular}[c]{@{}c@{}}0.1936\\ (0.0173)\end{tabular} \\ \hline
\end{tabular}%
}
\caption{Simulation 3 result}
\label{tab:sim3}
\end{table}

\begin{table}[p]
\centering
\resizebox{\columnwidth}{!}{%
\begin{tabular}{c|cccc|cccc|}
 & \multicolumn{4}{c|}{Simulation 4-1} & \multicolumn{4}{c|}{Simulation 4-2} \\ \cline{2-9} 
 & \multicolumn{1}{c|}{Oracle.prec} & \multicolumn{1}{c|}{glasso} & \multicolumn{1}{c|}{LAM} & IR & \multicolumn{1}{c|}{Oracle.prec} & \multicolumn{1}{c|}{glasso} & \multicolumn{1}{c|}{LAM} & IR \\ \hline
NPEB1 & \multicolumn{1}{c|}{\begin{tabular}[c]{@{}c@{}}0.1888\\ (0.0172)\end{tabular}} & \multicolumn{1}{c|}{\begin{tabular}[c]{@{}c@{}}0.1990\\ (0.0178)\end{tabular}} & \multicolumn{1}{c|}{\begin{tabular}[c]{@{}c@{}}0.1554\\ (0.0155)\end{tabular}} & \begin{tabular}[c]{@{}c@{}}0.1625\\ (0.0159)\end{tabular} & \multicolumn{1}{c|}{\begin{tabular}[c]{@{}c@{}}0.2601\\ (0.0198)\end{tabular}} & \multicolumn{1}{c|}{\begin{tabular}[c]{@{}c@{}}0.2523\\ (0.0190)\end{tabular}} & \multicolumn{1}{c|}{\begin{tabular}[c]{@{}c@{}}0.2024\\ (0.0171)\end{tabular}} & \begin{tabular}[c]{@{}c@{}}0.2107\\ (0.0176)\end{tabular} \\
NPEB2 & \multicolumn{1}{c|}{\begin{tabular}[c]{@{}c@{}}0.2424\\ (0.0185)\end{tabular}} & \multicolumn{1}{c|}{\begin{tabular}[c]{@{}c@{}}0.1985\\ (0.0177)\end{tabular}} & \multicolumn{1}{c|}{\begin{tabular}[c]{@{}c@{}}0.1612\\ (0.0164)\end{tabular}} & \begin{tabular}[c]{@{}c@{}}0.1684\\ (0.0168)\end{tabular} & \multicolumn{1}{c|}{\begin{tabular}[c]{@{}c@{}}0.3314\\ (0.0212)\end{tabular}} & \multicolumn{1}{c|}{\begin{tabular}[c]{@{}c@{}}0.2527\\ (0.0192)\end{tabular}} & \multicolumn{1}{c|}{\begin{tabular}[c]{@{}c@{}}0.2150\\ (0.0180)\end{tabular}} & \begin{tabular}[c]{@{}c@{}}0.2191\\ (0.0184)\end{tabular} \\
NPMLE1 & \multicolumn{1}{c|}{\begin{tabular}[c]{@{}c@{}}0.1691\\ (0.0159)\end{tabular}} & \multicolumn{1}{c|}{\begin{tabular}[c]{@{}c@{}}0.1983\\ (0.0178)\end{tabular}} & \multicolumn{1}{c|}{\begin{tabular}[c]{@{}c@{}}0.1536\\ (0.0155)\end{tabular}} & \begin{tabular}[c]{@{}c@{}}0.1610\\ (0.0162)\end{tabular} & \multicolumn{1}{c|}{\begin{tabular}[c]{@{}c@{}}0.2385\\ (0.0187)\end{tabular}} & \multicolumn{1}{c|}{\begin{tabular}[c]{@{}c@{}}0.2524\\ (0.0191)\end{tabular}} & \multicolumn{1}{c|}{\begin{tabular}[c]{@{}c@{}}0.2020\\ (0.0175)\end{tabular}} & \begin{tabular}[c]{@{}c@{}}0.1610\\ (0.0162)\end{tabular} \\
NPMLE2 & \multicolumn{1}{c|}{\begin{tabular}[c]{@{}c@{}}0.2241\\ (0.0182)\end{tabular}} & \multicolumn{1}{c|}{\begin{tabular}[c]{@{}c@{}}0.1985\\ (0.0176)\end{tabular}} & \multicolumn{1}{c|}{\begin{tabular}[c]{@{}c@{}}0.1579\\ (0.0161)\end{tabular}} & \begin{tabular}[c]{@{}c@{}}0.1666\\ (0.0167)\end{tabular} & \multicolumn{1}{c|}{\begin{tabular}[c]{@{}c@{}}0.3215\\ (0.0217)\end{tabular}} & \multicolumn{1}{c|}{\begin{tabular}[c]{@{}c@{}}0.2530\\ (0.0192)\end{tabular}} & \multicolumn{1}{c|}{\begin{tabular}[c]{@{}c@{}}0.2127\\ (0.0185)\end{tabular}} & \begin{tabular}[c]{@{}c@{}}0.1666\\ (0.0167)\end{tabular} \\
SM & \multicolumn{1}{c|}{\begin{tabular}[c]{@{}c@{}}0.2998\\ (0.0208)\end{tabular}} & \multicolumn{1}{c|}{\begin{tabular}[c]{@{}c@{}}0.2015\\ (0.0180)\end{tabular}} & \multicolumn{1}{c|}{\begin{tabular}[c]{@{}c@{}}0.1944\\ (0.0172)\end{tabular}} & \begin{tabular}[c]{@{}c@{}}0.1919\\ (0.0169)\end{tabular} & \multicolumn{1}{c|}{\begin{tabular}[c]{@{}c@{}}0.3660\\ (0.0217)\end{tabular}} & \multicolumn{1}{c|}{\begin{tabular}[c]{@{}c@{}}0.2558\\ (0.0192)\end{tabular}} & \multicolumn{1}{c|}{\begin{tabular}[c]{@{}c@{}}0.2448\\ (0.0187)\end{tabular}} & \begin{tabular}[c]{@{}c@{}}0.2422\\ (0.0190)\end{tabular} \\ \hline
\end{tabular}%
}
\caption{Simulation 4 result}
\label{tab:sim4}
\end{table}

\begin{table}[p]
\centering
\resizebox{\columnwidth}{!}{%
\begin{tabular}{c|cccc|cccc|}
 & \multicolumn{4}{c|}{Simulation 5-1} & \multicolumn{4}{c|}{Simulation 5-2} \\ \cline{2-9} 
 & \multicolumn{1}{c|}{Oracle.prec} & \multicolumn{1}{c|}{glasso} & \multicolumn{1}{c|}{LAM} & IR & \multicolumn{1}{c|}{Oracle.prec} & \multicolumn{1}{c|}{glasso} & \multicolumn{1}{c|}{LAM} & IR \\ \hline
NPEB1 & \multicolumn{1}{c|}{\begin{tabular}[c]{@{}c@{}}0.0960\\ (0.0134)\end{tabular}} & \multicolumn{1}{c|}{\begin{tabular}[c]{@{}c@{}}0.3562\\ (0.0209)\end{tabular}} & \multicolumn{1}{c|}{\begin{tabular}[c]{@{}c@{}}0.3077\\ (0.0203)\end{tabular}} & \begin{tabular}[c]{@{}c@{}}0.3055\\ (0.0215)\end{tabular} & \multicolumn{1}{c|}{\begin{tabular}[c]{@{}c@{}}0.1164\\ (0.0144)\end{tabular}} & \multicolumn{1}{c|}{\begin{tabular}[c]{@{}c@{}}0.1346\\ (0.0147)\end{tabular}} & \multicolumn{1}{c|}{\begin{tabular}[c]{@{}c@{}}0.1060\\ (0.0133)\end{tabular}} & \begin{tabular}[c]{@{}c@{}}0.1004\\ (0.0131)\end{tabular} \\
NPEB2 & \multicolumn{1}{c|}{\begin{tabular}[c]{@{}c@{}}0.1250\\ (0.0146)\end{tabular}} & \multicolumn{1}{c|}{\begin{tabular}[c]{@{}c@{}}0.3592\\ (0.0207)\end{tabular}} & \multicolumn{1}{c|}{\begin{tabular}[c]{@{}c@{}}0.3224\\ (0.0211)\end{tabular}} & \begin{tabular}[c]{@{}c@{}}0.3190\\ (0.0212)\end{tabular} & \multicolumn{1}{c|}{\begin{tabular}[c]{@{}c@{}}0.1732\\ (0.0165)\end{tabular}} & \multicolumn{1}{c|}{\begin{tabular}[c]{@{}c@{}}0.1363\\ (0.0149)\end{tabular}} & \multicolumn{1}{c|}{\begin{tabular}[c]{@{}c@{}}0.1124\\ (0.0134)\end{tabular}} & \begin{tabular}[c]{@{}c@{}}0.1026\\ (0.0135)\end{tabular} \\
NPMLE1 & \multicolumn{1}{c|}{\begin{tabular}[c]{@{}c@{}}0.0926\\ (0.0130)\end{tabular}} & \multicolumn{1}{c|}{\begin{tabular}[c]{@{}c@{}}0.3559\\ (0.0212)\end{tabular}} & \multicolumn{1}{c|}{\begin{tabular}[c]{@{}c@{}}0.3059\\ (0.0203)\end{tabular}} & \begin{tabular}[c]{@{}c@{}}0.3005\\ (0.0210)\end{tabular} & \multicolumn{1}{c|}{\begin{tabular}[c]{@{}c@{}}0.1040\\ (0.0131)\end{tabular}} & \multicolumn{1}{c|}{\begin{tabular}[c]{@{}c@{}}0.1341\\ (0.0147)\end{tabular}} & \multicolumn{1}{c|}{\begin{tabular}[c]{@{}c@{}}0.1045\\ (0.0133)\end{tabular}} & \begin{tabular}[c]{@{}c@{}}0.0992\\ (0.0130)\end{tabular} \\
NPMLE2 & \multicolumn{1}{c|}{\begin{tabular}[c]{@{}c@{}}0.1131\\ (0.0143)\end{tabular}} & \multicolumn{1}{c|}{\begin{tabular}[c]{@{}c@{}}0.3579\\ (0.0207)\end{tabular}} & \multicolumn{1}{c|}{\begin{tabular}[c]{@{}c@{}}0.3206\\ (0.0209)\end{tabular}} & \begin{tabular}[c]{@{}c@{}}0.3135\\ (0.0207)\end{tabular} & \multicolumn{1}{c|}{\begin{tabular}[c]{@{}c@{}}0.1611\\ (0.0159)\end{tabular}} & \multicolumn{1}{c|}{\begin{tabular}[c]{@{}c@{}}0.1358\\ (0.0150)\end{tabular}} & \multicolumn{1}{c|}{\begin{tabular}[c]{@{}c@{}}0.1112\\ (0.0134)\end{tabular}} & \begin{tabular}[c]{@{}c@{}}0.1019\\ (0.0133)\end{tabular} \\
SM & \multicolumn{1}{c|}{\begin{tabular}[c]{@{}c@{}}0.2020\\ (0.0174)\end{tabular}} & \multicolumn{1}{c|}{\begin{tabular}[c]{@{}c@{}}0.3633\\ (0.0212)\end{tabular}} & \multicolumn{1}{c|}{\begin{tabular}[c]{@{}c@{}}0.3667\\ (0.0207)\end{tabular}} & \begin{tabular}[c]{@{}c@{}}0.3464\\ (0.0216)\end{tabular} & \multicolumn{1}{c|}{\begin{tabular}[c]{@{}c@{}}0.2451\\ (0.0194)\end{tabular}} & \multicolumn{1}{c|}{\begin{tabular}[c]{@{}c@{}}0.1397\\ (0.0152)\end{tabular}} & \multicolumn{1}{c|}{\begin{tabular}[c]{@{}c@{}}0.1377\\ (0.0151)\end{tabular}} & \begin{tabular}[c]{@{}c@{}}0.1166\\ (0.0139)\end{tabular} \\ \hline
\end{tabular}%
}
\caption{Simulation 5 result}
\label{tab:sim5}
\end{table}

\section{Real Data Analysis}
\label{sec:Realdata}
In this section, we consider 
five real data sets to   
compare the performance of various binary classification methods discussed in this paper.  Through these real data examples, 
we investigate the role of mean vector and precision matrix estimation in high-dimensional classification.

Four real datasets are used for comparing the performance of discriminant rules. 
\begin{itemize}
    \item EEG data: 122 observations (77: group 1, 45: group 2) with 512 features
    
    We used EEG (Electroencephalography) data to validate linear discriminant rules under multiple estimation methods (which is available at \url{https://archive.ics.uci.edu/ml/datasets/eeg+database)}. The data was initially generated for the large study about the genetic predisposition of alcoholism. Measurements from 64 electrodes placed on the subject's scalps were sampled at 256 Hz for 1 second. For data pre-processing, we parsed this time series data into 8 intervals and extracted each interval's median value. Therefore, we finally made the data of $64 \times 8 = 512$ dimensional vectors from each subject. With 122 observations in sum ($n_1 = 77$ subjects from the alcoholic group and $n_2=45$ from the control group), we examined the performance of each discriminant rule using the LOOCV method.

    \item Gravier data: 168 observations (111: group 1, 57: group 2) with 1000 features
    
    Gene array data obtained for the study pertaining to the prediction of metastasis of small node-negative breast carcinoma is also utilized, which we named Gravier data. Small invasive ductal carcinomas without axillary lymph node involvement (T1T2N0) were analyzed from 168 patients. Among these, 111 subjects who had no events for 5 years afterward were categorized as group 1, and the other 57 subjects with early metastasis were assigned as group 2. We processed the original data by extracting 1000 features that have the most significant t values among the 2905 features beforehand. 

    \item Ham data: 214 observations (111: group 1, 103: group 2) with 431 features.

    Ham data was obtained by food spectrograph about 19 Spanish and 18 French dry-cured hams. A food spectrograph is utilized in chemometrics to classify food types, which can be directly used to assure food safety and quality. This data is publicly available at \url{http://www.timeseriesclassification.com/description.php?Dataset=Ham}. From 37 hams in total, up to 6 observations were obtained. Therefore, 214 observations were categorized into two groups (111 observations for group 1 and 103 for group 2), with 431 feature vectors for each observation. The data preprocessing procedure is described in `Sodium dodecyl sulphate-polyacrylamide gel electrophoresis of proteins in dry-cured hams: Data registration and multivariate analysis across multiple gels' (to be in reference).

    \item IMVigor data: 298 observations (230: group 1, 68: group 2) with 4792 features (divided into 7 groups)
    
    Clinical outcomes of metastatic urothelial 
    cancer patients were collected and are available at \url{http://research-pub.gene.com/IMvigor210CoreBiologies/} (see \cite{mariathasan2018tgfbeta}).
    Observations about 230 non-responders and 68 responders are composed of 4792 feature vectors. These features are known to be correlated within seven groups, which contain 1583, 975, 569, 548, 546, 341, and 230 features each. Therefore, we investigated the error rate of linear discriminant rules with and without this group information.

\end{itemize}

Within these data, classification performance has been measured through Leave-One-Out Cross Validation (LOOCV) method. 
As introduced in \ref{sec: construction of discriminant rules}, linear discriminant rules  can be constructed by estimating $\mu_1^*$ and $\mu_2^*$ respectively, or $\mu_2^* - \mu_1^*$ at once. Therefore, we compared the error rate of each discriminant rule according to the mean estimation method, precision estimation method, and discriminant rule construction method.

As one can see from Table \ref{tab:ungrouped real data}, classification errors largely depend on precision matrix estimation methods. A more suitable precision estimation method relies on the data. For example, in HAM data, the LAM method shows a lower error rate, while glasso method performs well with Gravier data. Among mean vector estimation methods, error rates do not vary significantly.  However, when classifying Gravier data with glasso precision estimation method, the error rate varied significantly according to the mean estimation method. Therefore, we compared each component of $\widehat{\mu_2^* - \mu_1^*}$ among mean estimation methods. 

Figure \ref{fig: decor mean comparison result} shows 
the plots of NPEB vs. SM and NPMLE vs. SM after two methods of decorrelation. 
The NPEB and NPMLE tend to have the shrinkage effect of the SM, especially for large SM values. 
It is seen that the   NPMLE in \eqref{eqn:GMLE} with the estimated $\hat G$ has the monotonicity property of SM. 
On the other hand, the NPEB with estimated marginal density $\hat f$ and $\hat f'$ is not guaranteed to have monotonicity, 
so there exist some wiggly patterns in local regions, however, overall patterns of NPEB are similar to those of NPMLE. 
In general, the posterior mean in \eqref{eqn:GMLE} based on $f$-modeling is not guaranteed to have the monotonicity in $z$ while 
the $g$-modeling such as plugging in estimated $\hat G$  into \eqref{eqn:GMLE}   guarantees  the monotonicity in $z$.  
See the discussion on this issue in \cite{ignatiadis2022confidence} for the Poisson model, which is also true for the case of normal mean estimation.




To summarize,  
the correlation, such as glasso and LAM methods, is effective in most of the cases in that the IR without decorrelation is improved although 
those two methods make the IR worse in some data sets.  
In particular, the glasso is sensitive to different cases since 
the glasso is designed for a sparse structured precision matrix. 
On the other hand, the LAM method is robust to all data sets except Gravier data in which the performance of all classification methods is  worse than the glasso and the IR.  
For the glasso method,  
different classification methods 
have large variations in error  rate while 
the LAM method produces quite similar error rates for different rules.


\begin{figure}
    \centering
    \includegraphics[scale = 1]{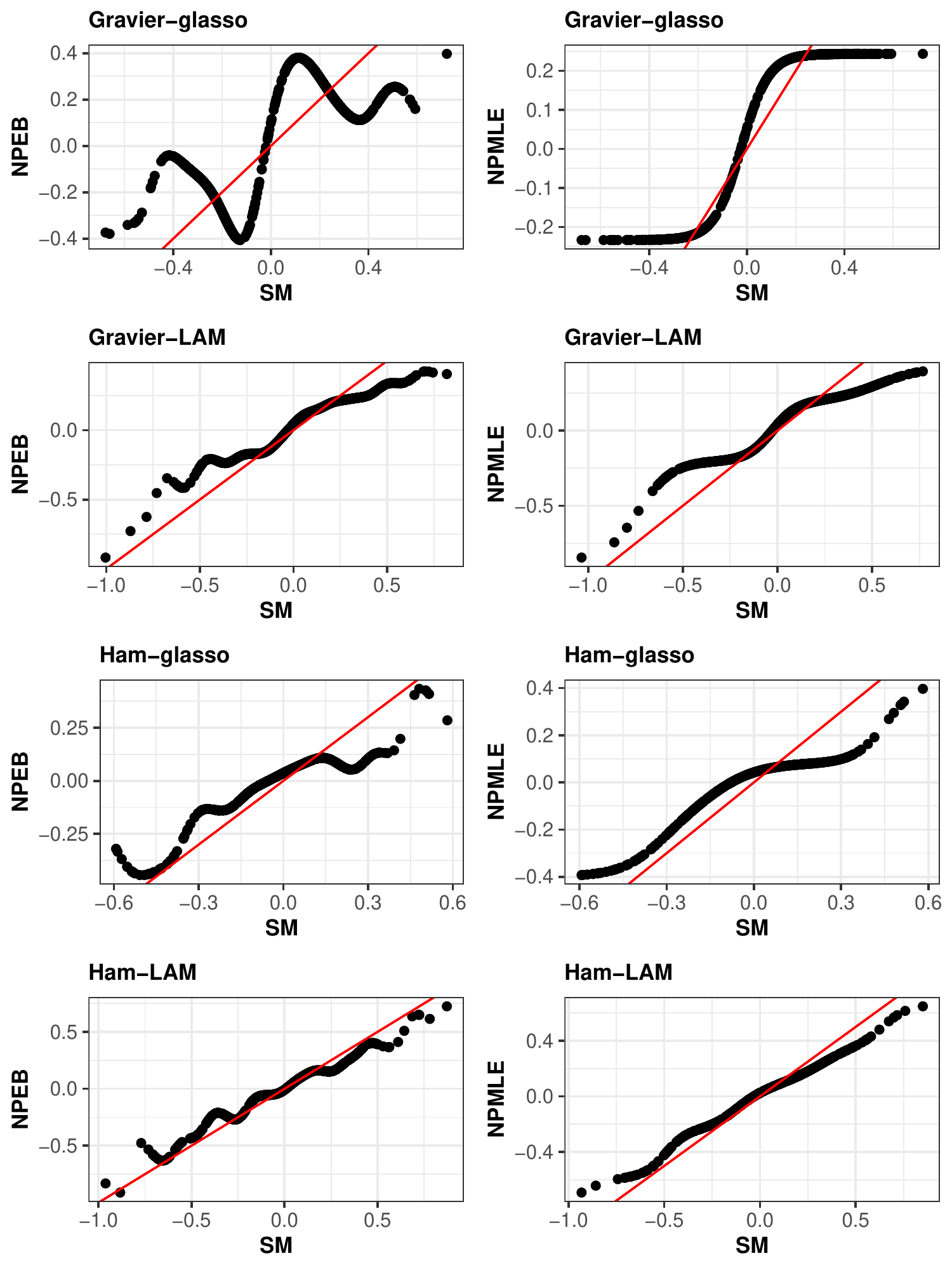}
    \caption{$(\widehat{\mu_2^*-\mu_1^*)}$ plot - NPEB, NPMLE method to SM method}
    \label{fig: decor mean comparison result}
\end{figure}



\begin{table}
\centering
\resizebox{\columnwidth}{!}{%
\begin{tabular}{c|ccc|ccc|ccc|ccc|}
 & \multicolumn{3}{c|}{EEG} & \multicolumn{3}{c|}{Gravier} & \multicolumn{3}{c|}{Ham} & \multicolumn{3}{c|}{IMVigor} \\ \hline
 & glasso & LAM & IR & glasso & LAM & IR & glasso & LAM & IR & glasso & LAM & IR \\ \hline
NPEB1 & *22/122 & 24/122 & 27/122 & *19/168 & 32/168 & 24/168 & 39/214 & 31/214 & 49/214 & 169/298 & 71/298 & 111/298 \\
NPEB2 & 23/122 & 28/122 & 28/122 & 31/168 & 33/168 & 31/168 & 33/214 & *30/214 & 53/214 & 104/298 & *70/298 & 111/298 \\
NPMLE1 & 23/122 & 26/122 & 27/122 & 26/168 & 31/168 & 29/168 & 40/214 & 32/214 & 49/214 & 128/298 & *70/298 & 112/298 \\
NPMLE2 & 24/122 & 26/122 & 27/122 & 30/168 & 33/168 & 31/168 & 33/214 & *30/214 & 52/214 & 106/298 & *70/298 & 110/298 \\
SM & 24/122 & 26/122 & 28/122 & 31/168 & 33/168 & 31/168 & 34/214 & *30/214 & 50/214 & 106/298 & *70/298 & 111/298 \\ \hline
\end{tabular}%
}
\caption{LOOCV classification result about EEG, Gravier, Ham, and IMVigor datasets of corresponding discriminant rules. 
 $'*'$ indicates the lowest error rate in each data set.}
\label{tab:ungrouped real data}
\end{table}

\section{Concluding Remarks}
\label{sec:Concluding}
Combining different types of precision matrix and mean vector  estimation strategies, we investigate the performances of 
the various linear discriminant rules. We evaluate these rules
under various contrived settings. With dense and sparse structures of precision matrices and the difference of mean vectors, the performances of the discriminant rules  are investigated. We, therefore, observe that linear discriminant rules perform well when the simulation situation is aligned with the assumptions that each estimation strategy is based on.
Our results including numerical studies and real data examples 
show that 
none of the discriminant rules 
tend to dominate the others.
In particular, we emphasize that 
the theoretical result is presented on the NPEB. 
We believe that this is an interesting result   in the sense that 
we can compare $f$-modeling and $g$-modeling theoretically, which are corresponding to the NPEB and NPMLE where the performance of the NPMLE is studied in \cite{park2022high}.    

One interesting phenomenon is that the structure of the mean vector may be changed after decorrelation. Since our results in this paper show that the different methods of estimation of mean vectors have different performances depending on the structure of mean vectors, so it will be interesting to investigate the changes in mean vector structures after  decorrelation and their effect on the choice of estimation methods. We leave this as future work.  

\section*{Acknowledgement}
Research of J. Park was supported by the National Research Foundation of Korea (NRF) grant funded
by the Korea government (MSIT) (No. 2020R1A2C1A01100526).
\bibliographystyle{acm}
\bibliography{bib_HDM}

\appendix

\section{Lemmas for Theorem 1}
We present the following four lemmas which are used in the proof of Theorem 1. 
\begin{lemma}\label{lemma: bigger than 1-1 over p}
For $p>1$,
\begin{equation}
    \left(1 - \frac{1}{p^2} \right)^p \ge 1 - \cfrac{1}{p}.
\end{equation}
\end{lemma}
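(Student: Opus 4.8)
The plan is to invoke Bernoulli's inequality. Recall that for any real $x \ge -1$ and any exponent $r \ge 1$ one has $(1+x)^r \ge 1 + rx$. I would apply this with $x = -1/p^2$ and $r = p$. Since $p > 1$ forces $p^2 > 1$, the value $x = -1/p^2$ satisfies $x > -1$, so the hypothesis of Bernoulli's inequality is met, and $r = p > 1$. Hence
\[
\left(1 - \frac{1}{p^2}\right)^p \;\ge\; 1 + p\left(-\frac{1}{p^2}\right) \;=\; 1 - \frac{1}{p},
\]
which is exactly the assertion.

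If one prefers to avoid the real-exponent form of Bernoulli's inequality, I would note that in the intended application $p$ is a positive integer with $p \ge 2$, and then use the elementary integer version $(1+x)^n \ge 1 + nx$ for $x \ge -1$ and $n \in \mathbb{N}$, which is proved by a one-line induction on $n$. The same substitution $x = -1/p^2$, $n = p$ finishes the proof. A more computational alternative, which I would not pursue, is to factor $1 - 1/p^2 = (1-1/p)(1+1/p)$, reducing the claim to $(1-1/p)^{p-1}(1+1/p)^{p} \ge 1$ and then comparing logarithms; this is strictly more work.

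There is essentially no obstacle here: the only point needing a moment's attention is verifying that $-1/p^2 \ge -1$ so that Bernoulli's inequality is applicable, and this is immediate from $p > 1$.
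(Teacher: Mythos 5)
Your proof is correct, and it is genuinely shorter and more elementary than the one in the paper. You invoke Bernoulli's inequality $(1+x)^r \ge 1+rx$ with $x=-1/p^2 \ge -1$ and $r=p>1$, which immediately yields $\left(1-\tfrac{1}{p^2}\right)^p \ge 1-\tfrac{1}{p}$; the only hypothesis to check is $-1/p^2 \ge -1$, which you do. The paper instead takes logarithms and reduces the claim to the monotonicity of $f(x)=x\log\left(\tfrac{x+1}{x}\right)$, verified by computing $f'$ and $f''$, observing $f''<0$ and $\lim_{x\to\infty}f'(x)=0$ so that $f'>0$, and then applying $f(p)\ge f(p-1)$ together with the factorization $1-\tfrac{1}{p^2}=\left(1-\tfrac{1}{p}\right)\left(1+\tfrac{1}{p}\right)$. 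Both arguments are valid for all real $p>1$ (and in the paper's application $p$ is an integer, so even the induction form of Bernoulli suffices); your route buys brevity and avoids calculus, while the paper's route is self-contained in the sense that it does not cite a named inequality. No gap in either.
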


\begin{proof}
One can easily show that $f(x) = x\log \left(\cfrac {x+1}{x}\right)$ is an increasing function, by checking
\begin{equation*}
    f'(x) = \log \left(\cfrac {x+1}{x}\right) - \cfrac{1}{x+1},\; f''(x) = - \cfrac{1}{x(x+1)^2} < 0
\end{equation*}
and $\lim_{x \to \infty} f'(x) = 0$. From $f(p) \ge f(p-1)$, one can obtain 
\begin{equation*}
    p\log (1 - \cfrac{1}{p^2}) \ge \log (1 - \frac{1}{p}),
\end{equation*}
which concludes the proof.
\end{proof}

\begin{lemma}\label{lemma: max S_j prob}
Let $Z_1, \cdots, Z_p \overset{\mathrm{iid}}{\sim} N(0, 1)$ and 
$\Delta = p^b$. For $b<0$ and 
the fixed constant $a_n =  \left(1/n_1 + 1/n_2 \right)^{-1/2}$,  
\begin{equation}\label{eqn: max S_j prob}
    P \left(\max_{1\le j \le p} |Z_j| + a_n\Delta < \log p \right) > 1-\frac{1}{p}
\end{equation}
holds for large enough $p$.
\end{lemma}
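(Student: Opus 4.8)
The plan is to bound the probability of the complementary event and show it is at most $1/p$. Write $M_p := \max_{1 \le j \le p} |Z_j|$. Since $a_n \Delta = a_n p^b \to 0$ as $p \to \infty$ (because $b < 0$ and $a_n$ is fixed), for large $p$ we have $a_n \Delta < 1$, say, and more importantly $\log p - a_n \Delta \ge \log p - 1$. Hence it suffices to show $P(M_p \ge \log p - a_n\Delta) < 1/p$ for large $p$, and for this it is enough to produce a threshold $t_p$ with $t_p \le \log p - a_n\Delta$ for large $p$ and $P(M_p \ge t_p) < 1/p$.

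First I would use the union bound together with a standard Gaussian tail estimate. For a single $Z \sim N(0,1)$ and $t > 0$, one has $P(|Z| \ge t) \le 2\phi(t)/t = \sqrt{2/\pi}\, e^{-t^2/2}/t$ (or the cruder $P(|Z|\ge t) \le e^{-t^2/2}$ for $t \ge 1$). Therefore
\begin{equation*}
    P\left(M_p \ge t\right) \le p \cdot P(|Z_1| \ge t) \le p\, e^{-t^2/2}.
\end{equation*}
I want the right-hand side to be strictly less than $1/p$, i.e. $p^2 e^{-t^2/2} < 1$, i.e. $t^2 > 4 \log p$, i.e. $t > 2\sqrt{\log p}$. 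So choosing any fixed multiple works; take $t_p = 2\sqrt{\log p} + 1$, which makes $p\,e^{-t_p^2/2} < p \cdot p^{-2} = 1/p$ comfortably (with room to spare from the extra $+1$, or alternatively absorb the polynomial factor $1/t_p$ from the sharper tail bound).

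The remaining step is purely analytic: check that $t_p = 2\sqrt{\log p} + 1 \le \log p - a_n \Delta$ for all sufficiently large $p$. Since $a_n\Delta = a_n p^b \to 0$, it is enough that $2\sqrt{\log p} + 1 \le \log p - 1$ eventually, and writing $u = \sqrt{\log p} \to \infty$ this is $u^2 - 2u - 2 \ge 0$, which holds for $u \ge 3$, i.e. for $p \ge e^9$. Combining the two displays: for $p$ large enough,
\begin{equation*}
    P\left(M_p + a_n\Delta \ge \log p\right) \le P\left(M_p \ge \log p - a_n\Delta\right) \le P\left(M_p \ge t_p\right) \le p\, e^{-t_p^2/2} < \frac1p,
\end{equation*}
which is exactly the complement of \eqref{eqn: max S_j prob}.

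There is no real obstacle here; the argument is a routine union bound plus Gaussian tail bound, and the only thing to be slightly careful about is keeping track of which inequalities hold only "for large enough $p$" — the statement itself is asymptotic, so this is harmless. Lemma \ref{lemma: bigger than 1-1 over p} is presumably invoked in a different branch of the proof of Theorem \ref{thm:NPEB} (to control a product $(1 - p^{-2})^p$ coming from independence of the $|Z_j|$ in a two-sided bound), but for this particular lemma the union bound alone suffices; if one instead wanted to use exact independence, one would write $P(M_p < t_p) = (1 - P(|Z_1| \ge t_p))^p$ and apply Lemma \ref{lemma: bigger than 1-1 over p} with $P(|Z_1|\ge t_p) \le p^{-2}$, reaching the same conclusion.
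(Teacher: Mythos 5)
Your proof is correct. It reaches the bound by a mildly different route from the paper's: the paper exploits independence to write the probability exactly as $\{1-2\Phi(a_n\Delta-\log p)\}^p$, bounds the per-coordinate tail at the threshold $\log p - a_n\Delta \ge (\log p)/2$ by $p^{-2}$ via the Mills-ratio inequality $1-\Phi(t)\le \phi(t)/t$, and then needs the auxiliary Lemma~\ref{lemma: bigger than 1-1 over p} to convert $(1-p^{-2})^p$ into $\ge 1-1/p$. Your union bound replaces the product by a sum, $P(\max_j|Z_j| \ge t) \le p\,e^{-t^2/2}$, which makes Lemma~\ref{lemma: bigger than 1-1 over p} unnecessary, and your intermediate threshold $t_p = 2\sqrt{\log p}+1$ sits far below the lemma's $\log p - a_n\Delta$, which correctly reflects how generous the $\log p$ cutoff is. Both arguments hinge on the same fact --- the standard normal tail at the relevant threshold is $o(p^{-2})$ --- so the difference is one of bookkeeping rather than substance; your version is marginally more elementary and self-contained, while the paper's product form gives (and discards) a slightly sharper constant. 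All the quantitative steps you give (the strictness of $p\,e^{-t_p^2/2} < 1/p$, the verification that $t_p \le \log p - a_n\Delta$ once $\log p \ge 9$ and $a_n\Delta \le 1$) check out.
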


\begin{proof}
Noting that for all $t>0$, we have  
\begin{equation*}
    \frac{t}{t^2+1}\phi(t) < 1-\Phi(t) \le \frac{1}{t}\phi(t)
\end{equation*}
where $\phi, \Phi$ denote the probability distribution function and the cumulative distribution function of standard normal distribution. Then we obtain

\begin{eqnarray}
    P \left(\max_{1\le j \le p} |Z_j| + a_n\Delta < \log p \right) = \left\{1 - 2\Phi(a_n \Delta - \log p)\right\}^p 
    \ge \left\{1 - \cfrac{2\phi (\log p - a_n \Delta)}{\log p - a_n \Delta}\right\}^p.
    \label{eqn:maxS}
\end{eqnarray}
Since  
$a_n \Delta \to 0$ and  $\log p \to \infty$ as $p \to \infty$, 
 we have $2a_n\Delta < \log p$ for large $p$, 
 therefore  
\eqref{eqn:maxS} satisfies the followings: 
\begin{align*}
\left \{1 - \cfrac{2\phi (\log p - a_n \Delta)}{\log p - a_n \Delta}\right \}^p
 &\ge \left [1 - \cfrac{4\phi \left \{ (\log p) / 2 \right \} }{\log p}\right ]^p \\
 &= 
 \left[ 1 - \cfrac{4}{\sqrt{2\pi} \log p} \exp \left\{-(\log p)^2 / 8 \right\}\right ]^p.
\end{align*}

From lemma \ref{lemma: bigger than 1-1 over p}, 
for large $p$  satisfying 
\begin{equation*}
    \cfrac{4}{\sqrt{2\pi} \log p} \exp \left \{-\cfrac{(\log p)^2}{8} \right \} \le \cfrac{1}{p^2},
\end{equation*}

we obtain
$$P \left(\max_{1\le j \le p} |Z_j| + a_n\Delta < \log p \right) 
 \geq \left(1-\frac{1}{p^2}\right)^p \geq 1-\frac{1}{p}.$$
\end{proof}

\begin{lemma}\label{lemma: maximum prob bound}
Let $Z_i \sim N(0, 1)$ and $\Delta = p^b$. Then, for $b<0$ 
and $a_n =  \left(1/n_1 + 1/n_2 \right)^{-1/2}$, we have  
\begin{equation}
    P\left(|Z_i| + a_n\Delta <\sqrt{2\alpha(\log p + 1)} \right) > 1-\frac{8}{\sqrt{2\alpha\log p}}\frac{e}{\sqrt{2\pi}p^\alpha}
\end{equation}
for large enough $p$.
\end{lemma}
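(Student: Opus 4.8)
The plan is to reduce the statement to a one-sided Gaussian tail bound, exactly as in the proof of Lemma~\ref{lemma: max S_j prob}. Writing $t_p := \sqrt{2\alpha(\log p + 1)} - a_n\Delta$, observe that since $b<0$ we have $a_n\Delta = a_n p^b \to 0$, so $t_p > 0$ for all large $p$ and
\begin{equation*}
    P\left(|Z_i| + a_n\Delta < \sqrt{2\alpha(\log p + 1)}\right) = P(|Z_i| < t_p) = 1 - 2\{1 - \Phi(t_p)\}.
\end{equation*}
Applying the upper tail bound $1 - \Phi(t) \le \phi(t)/t$ already used in Lemma~\ref{lemma: max S_j prob}, it then suffices to show
\begin{equation*}
    \frac{2}{t_p}\,\phi(t_p) = \frac{2}{\sqrt{2\pi}\, t_p}\, e^{-t_p^2/2} \le \frac{8}{\sqrt{2\alpha\log p}}\cdot\frac{e}{\sqrt{2\pi}\, p^\alpha},
\end{equation*}
that is, $\dfrac{2}{t_p}\, e^{-t_p^2/2} \le \dfrac{8e}{\sqrt{2\alpha\log p}\; p^\alpha}$ for large $p$.

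Next I would bound the two factors separately. For the prefactor, since $\sqrt{2\alpha(\log p+1)} \ge \sqrt{2\alpha\log p}$ and $a_n\Delta\to 0$, we have $t_p \ge \tfrac12\sqrt{2\alpha\log p}$ for large $p$, hence $2/t_p \le 4/\sqrt{2\alpha\log p}$. For the exponential factor, expanding the square gives
\begin{equation*}
    t_p^2 = 2\alpha(\log p + 1) - 2a_n\Delta\sqrt{2\alpha(\log p+1)} + (a_n\Delta)^2 \ge 2\alpha(\log p+1) - 2a_n\Delta\sqrt{2\alpha(\log p+1)},
\end{equation*}
so that $e^{-t_p^2/2} \le p^{-\alpha}\, e^{-\alpha}\, e^{\,a_n\Delta\sqrt{2\alpha(\log p+1)}}$. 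Because $b<0$, the cross term satisfies $a_n\Delta\sqrt{2\alpha(\log p+1)} = a_n p^b\sqrt{2\alpha(\log p+1)} \to 0$, so it is $\le 1$ for large $p$, and $e^{-\alpha}\le 1$; therefore $e^{-t_p^2/2} \le e\, p^{-\alpha}$. Multiplying the two bounds yields $\dfrac{2}{t_p}\, e^{-t_p^2/2} \le \dfrac{4e}{\sqrt{2\alpha\log p}\; p^\alpha} < \dfrac{8e}{\sqrt{2\alpha\log p}\; p^\alpha}$, which gives the desired inequality and hence the lemma.

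The argument is essentially routine bookkeeping, so I do not anticipate a genuine obstacle; the only point requiring a little care is collecting the finitely many ``for $p$ large enough'' thresholds --- namely $a_n\Delta < \sqrt{2\alpha(\log p+1)}$, $a_n\Delta \le \tfrac12\sqrt{2\alpha\log p}$, and $a_n p^b\sqrt{2\alpha(\log p+1)} \le 1$ --- and noting that each holds eventually precisely because $b<0$ forces $a_n\Delta = a_n p^b$ to $0$ faster than any power of $\log p$ grows. (Note the bound obtained, $4e/(\sqrt{2\alpha\log p}\,p^\alpha)$, is in fact sharper than the stated $8e/(\sqrt{2\alpha\log p}\,p^\alpha)$, leaving room for cruder constants if preferred.)
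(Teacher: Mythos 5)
Your proposal is correct and follows essentially the same route as the paper's proof: the one-sided tail bound $1-\Phi(t)\le\phi(t)/t$, expansion of $t_p^2$ to isolate the $p^{-\alpha}$ factor, bounding the cross term $a_n\Delta\sqrt{2\alpha(\log p+1)}$ by $1$ (hence the factor $e$), and a lower bound on the prefactor of order $\sqrt{2\alpha\log p}$. The sharper constant $4e$ you obtain simply reflects that the paper spends an extra factor of $2$ at the tail-bound step, so nothing is lost.
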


\begin{proof}
From the fact $$\cfrac{t}{t^2+1}\phi(t) < 1-\Phi(t) \le \cfrac{1}{t}\phi(t),$$  for large enough $p$ which satisfy $a_n \sqrt{2\alpha(\log p + 1)} < p^{-b} = 1 / \Delta$ and $\sqrt{2\alpha(\log p + 1)} - 1> \sqrt{2\alpha\log p} \,/\, 2$, we obtain 
\begin{align*}
    &P\left(|Z_i| + a_n\Delta <\sqrt{2\alpha(\log p + 1)} \right) \\ &= 1 - 2\Phi \left(\sqrt{2\alpha(\log p + 1)}  - a_n \Delta\right) \\
    & > 1-\cfrac{2}{\sqrt{2\pi}}\cfrac{2}{\sqrt{2\alpha(\log p + 1)} - a_n\Delta} 
    \exp \left\{-\cfrac{1}{2}(\sqrt{2\alpha(\log p + 1)} - a_n\Delta)^2 \right\} \\
    & > 1 - \cfrac{4}{\sqrt{2\alpha (\log p + 1)} - 1}\cfrac{1}{\sqrt{2\pi}p^\alpha}\exp \left\{ a_n\Delta \sqrt{2\alpha(\log p + 1)} \right\}
    \\& >1-\cfrac{8}{\sqrt{2\alpha\log p}}\cfrac{e}{\sqrt{2\pi}p^\alpha}.
\end{align*}
\end{proof}


\begin{lemma} 
For $ \widehat{\mu_{D,i}}^{EB}$ defined in \eqref{eqn: mu hat}, we have the lower bound $W_i$ such that 
$$\sum_{i=1}^{p_1}  \widehat{\mu_{D,i}}^{EB} \geq \sum_{i=1}^{p_1} W_i \geq  \frac{h^2}{1+h^2} a_n p^{a+b}\{1+o(1)\}+o_p\left\{(\log p)^2\right\}$$
where

\begin{eqnarray}
    W_i =
    \begin{cases}
    W_{1i}, & \mbox{if } |Z_i| + a_n\Delta \le \sqrt{2\alpha(\log p + 1)} \\ 
    W_{2i}, & \mbox{o.w}
    \end{cases},
\end{eqnarray} 
and 
\begin{align*}
W_{1i} &= \cfrac{h^2 Z_i^*}{1+h^2} - \cfrac{h^2}{1+h^2}\cfrac{(1 + 2\sqrt{2\pi}p^{1/2+\epsilon})\sqrt{2\alpha(\log p + 1)}  +   {(1 + h^2)/h^4}}
    {p^{1-\alpha}(1-p^{a-1}) - 2p^{1/2+\epsilon}}, \\ 
W_{2i} &=  \Delta + \cfrac{1}{a_n}(1-\log p)Z_i - \cfrac{1}{a_n h^2} \log p.
\end{align*}

Here, $Z_i^* = Z_i + a_n\Delta$.
\label{lemma:lowbound}
\end{lemma}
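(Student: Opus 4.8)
The plan is to work directly from the explicit kernel form of $\widehat{\mu_{D,i}}^{EB}$ in \eqref{eqn: mu hat}, which (up to the scaling by $a_n$ described just before the lemma) reads $\widehat{\mu_{D,i}}^{EB} = \frac{1}{a_n}\bigl( Z_i^* + \hat f'(Z_i^*)/\hat f(Z_i^*) \bigr)$, where $\hat f$ is the Gaussian-kernel density estimate with bandwidth $h = 1/\sqrt{\log p}$ built from the $p$ scaled coordinates: the first $p_1 = [p^a]$ are signal coordinates $Z_j^* = Z_j + a_n\Delta$ with $Z_j \sim N(0,1)$, and the remaining $p - p_1$ are pure noise $N(0,1)$. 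A first observation is that $\hat f'(z)/\hat f(z)$ equals $h^{-2}$ times the kernel-weighted average of $\{Z_j^* - z\}_{j=1}^p$, and this representation is what both candidate bounds $W_{1i}$ and $W_{2i}$ rest on. I would then isolate the leading term: if all $p$ coordinates were genuinely $N(0,1)$, then $\hat f$ is in expectation the $N(0, 1+h^2)$ density, so $\hat f'(z)/\hat f(z) \approx -z/(1+h^2)$ and $\widehat{\mu_{D,i}}^{EB} \approx \frac{1}{a_n}\cdot\frac{h^2}{1+h^2}Z_i^*$; the remainder of the work is to quantify the two departures from this picture, namely the contamination of $\hat f,\hat f'$ by the $p_1$ signal coordinates and the random fluctuation of their noise part around its mean.

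Second, I would split on the coordinatewise events $E_i = \{ |Z_i| + a_n\Delta \le \sqrt{2\alpha(\log p + 1)} \}$, with a threshold parameter $\alpha$ to be chosen, whose complements are controlled by Lemma~\ref{lemma: maximum prob bound}, and throughout work on the high-probability event of Lemma~\ref{lemma: max S_j prob} on which $\max_j |Z_j| + a_n\Delta < \log p$. On $E_i$ the argument $Z_i^*$ is at most $\sqrt{2\alpha(\log p+1)}$, so a lower bound on $\hat f(Z_i^*)$ follows from a Gaussian-tail count — in expectation, at least of order $p^{1-\alpha}$ of the $p-p_1$ noise coordinates fall within an $O(1)$ window of $Z_i^*$ — together with a Bernstein/Hoeffding deviation of order $p^{1/2+\epsilon}$ made uniform over $i$ by a union bound; this is exactly where the denominator $p^{1-\alpha}(1-p^{a-1}) - 2p^{1/2+\epsilon}$ of $W_{1i}$ originates. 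The numerator of the correction in $W_{1i}$ comes from the matching upper bound on $|\hat f'(Z_i^*)|$, using $\max_j|Z_j| < \log p$, the factor $1/h^4 = (\log p)^2$, and a further $p^{1/2+\epsilon}$ deviation term. Off $E_i$ I would fall back on the crude bound $\widehat{\mu_{D,i}}^{EB} \ge \frac{1}{a_n}\bigl( Z_i^* + h^{-2}(\min_j Z_j^* - Z_i^*) \bigr)$ from the weighted-average representation, and then insert $\min_j Z_j^* \ge a_n\Delta - \log p$ (Lemma~\ref{lemma: max S_j prob}) together with $h^{-2} = \log p$, the $a_n\Delta$-contributions cancelling to leave precisely the $\Delta$ appearing in $W_{2i}$.

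Third, I would sum over $i = 1,\dots,p_1$. The leading term contributes $\frac{h^2}{1+h^2}\sum_{i=1}^{p_1}Z_i^* = \frac{h^2}{1+h^2}\bigl(p_1 a_n\Delta + \sum_{i=1}^{p_1}Z_i\bigr) = \frac{h^2}{1+h^2}a_n p^{a+b}\{1+o_p(1)\}$, since $\sum_{i=1}^{p_1}Z_i = O_p(p^{a/2}) = o_p(p^{a+b})$ in the $(a,b)$-range under consideration. It then remains to choose $\alpha$ (and the slack $\epsilon$) so that the summed $W_{1i}$-corrections — whose size is governed by a numerator of order $p^{1/2+\epsilon}\sqrt{\log p}+(\log p)^2$ against a denominator of order $p^{1-\alpha}$ — and the contribution of the indices outside $E_i$ — which are few by Lemma~\ref{lemma: maximum prob bound} and a union bound, each carrying $|W_{2i}| = O_p\{(\log p)^2\}$ — are each either of smaller order than $\frac{h^2}{1+h^2}a_n p^{a+b}$ (absorbed into its $\{1+o(1)\}$ factor) or of size $o_p\{(\log p)^2\}$. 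Collecting the three pieces gives both inequalities in the statement.

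I expect the main obstacle to be the uniform-in-$i$ lower bound on $\hat f(Z_i^*)$: one needs a strictly positive bound for every signal coordinate simultaneously even when $Z_i^*$ is as large as $\sqrt{2\alpha(\log p+1)}$, where the true density is only of order $p^{-\alpha}$, while the $p_1 = [p^a]$ signal coordinates are themselves all piled up near $a_n\Delta$ and could in principle distort $\hat f$ and $\hat f'$. The delicate bookkeeping is to pick $\alpha$ (and $\epsilon$) so that the Gaussian-tail count $p^{1-\alpha}$ simultaneously dominates the $p^{1/2+\epsilon}$ fluctuation, the $p^a$ signal coordinates, and the number of $E_i$-exceptional coordinates, and then to check that the resulting window for $\alpha$ is non-empty under the relevant $(a,b)$-constraints (in particular $-1/4 < b < 0$), which is what ties this lemma to the regions $A$ and $D$ of Theorem~\ref{thm:NPEB}.
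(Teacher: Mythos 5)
Your proposal is correct and follows essentially the same route as the paper: the same split on the per-coordinate event $|Z_i|+a_n\Delta\le\sqrt{2\alpha(\log p+1)}$ (controlled by Lemma~\ref{lemma: maximum prob bound}), the same global event $\max_j|Z_j|+a_n\Delta<\log p$ from Lemma~\ref{lemma: max S_j prob}, Hoeffding-type concentration of the kernel sums around their conditional means to produce the $p^{1-\alpha}(1-p^{a-1})-2p^{1/2+\epsilon}$ denominator of $W_{1i}$, and the weighted-average-versus-minimum bound for $W_{2i}$. The only differences are presentational (you phrase the concentration in terms of $\hat f,\hat f'$ rather than the paper's $A_i,B_i,C_i,D_i$, and close the summation by counting exceptional indices rather than by computing $E[W_1]$ and invoking the WLLN for triangular arrays), and a harmless constant-factor bookkeeping of $a_n$, which is fixed here.
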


\begin{proof}

{As this proof contains several technical details, we first summarize the flow of the proof. We obtain the first inequality by introducing the auxiliary random variable $V_i$ which satisfies $\widehat{\mu_{D,i}}^{EB} \ge V_i \ge W_i$. Note that the random variable $W_i$ is independent. The second inequality is shown by using the weak law of large numbers (WLLN).}

\begin{itemize}
\item 
First, 
we consider an auxiliary random variable $V_i$ as the following one :  
\begin{equation}
    V_i = 
    \begin{cases}
    V_{1i}, & \mbox{if } |Z_i| + a_n\Delta \le \sqrt{2\alpha(\log p + 1)} \\ 
    V_{2i}, & \mbox{o.w}
    \end{cases}
\end{equation}

where 
\begin{align*}
    V_{1i} &= \Delta + \cfrac{Z_i}{a_n} - \cfrac{1}{h^2}\cfrac{\cfrac{1}{\sqrt{2\pi}}\cfrac{h^3}{(1+h^2)^{3/2}} \left\{ 
    (p_1-1)Z_i T_i+(p-p_1)Z_i^* T_i^* \right\}
    +2p^{1/2+\epsilon}}
    {\cfrac{1}{\sqrt{2\pi}}+\cfrac{1}{\sqrt{2\pi}}\cfrac{h}{(1+h^2)^{1/2}}
    \left \{(p_1-1)T_i+(p-p_1)T_i^* \right \}
    -2p^{1/2+\epsilon}},\\
    V_{2i} &= \Delta + \cfrac{Z_i}{a_n} + \cfrac{1}{a_n h^2}
    \min{\{(Z_{ji}^-)_{1 \le j \le p_1}, (Z_{ji}^- + a_n \Delta)_{p_1+1 \le j \le p}\}}.
\end{align*}

Here, $Z_{ij}^-, Z_i^*, T_i, T_i^*$ are defined as follows. 
\begin{align*}
   Z_{ij}^- = Z_i - Z_j, \quad  \quad &Z_i^*= Z_i + a_n \Delta, \\
    T_i = \exp \left\{-\frac{Z_i^2}{2(1+h^2)} \right\}, \quad 
    &T_i^* =  \exp \left \{-\frac{Z_i^{*2}}{2(1+h^2)} \right \}.
\end{align*}

We first notice that $V_{i}$ is constructed to always satisfy $ \widehat{\mu_{D,i}}^{EB} \ge  V_{2i}$ (see the derivation of (\ref{eqn:Ki min bound})), and $ \widehat{\mu_{D,i}}^{EB} \ge V_{1i}$ on $E = \cap_{i=1}^{p_1}E_i$.  
{$E$ is the set of events in which $Z_i$s satisfy a specific relationship, with $P(E)\to 1$ as $ p \to \infty$. Here, we defer the detailed definition of $E$ to Appendix B. This implies that $\widehat{\mu_{D,i}}^{EB}$ can be bounded by $V_{1i}$ with large probability and otherwise, by $V_{2i}$.}




\item In addition, one can rewrite $V_{1i}$ as
\begin{align*}
    V_{1i} & = \cfrac{h^2 Z_i^*}{1+h^2} + \cfrac{\cfrac{1}{\sqrt{2\pi}} \bigg (\cfrac{h^2Z_i^*}{1+h^2} + \cfrac{h (p_1-1) a_n \Delta T_i}{(1+h^2)^{3/2}} \bigg )-2p^{1/2+\epsilon}\bigg(\cfrac{h^2 Z_i^*}{1+h^2}+\cfrac{1}{h^2}\bigg)}
    {\cfrac{1}{\sqrt{2\pi}}+\cfrac{1}{\sqrt{2\pi}}\cfrac{h \left \{(p_1-1)T_i+(p-p_1)T_i^* \right \}}{(1+h^2)^{1/2}}-2p^{1/2+\epsilon}}.
\end{align*}

By plugging in the inequality $|Z_i| + a_n\Delta \le \sqrt{2\alpha(\log p + 1)}$ in $V_{1i}$ and the condition from $E_{5i}$ for $V_{2i}$, we obtain $W_{i} \le V_{i}$ for $W_i$ defined above.

Eventually, $W_i$ serves as the lower bound of $\widehat{\mu_{D,i}}^{EB}$ which are mutually independent. Therefore, by applying WLLN for triangular arrays on $W_i$, we show the second inequality of the lemma.

For large enough $p$ which satisfies 
\begin{equation}\label{eqn : large p condition}
    \log p \ge \sqrt{2\alpha(\log p + 1)},
\end{equation}

$$
|W_{1i}| \le  \cfrac{h^2}{1+h^2} \left\{\log p +\cfrac{(1 + 2\sqrt{2\pi}p^{1/2+\epsilon}) \log p  +   {(1 + h^2)/h^4}}
    {p^{1-\alpha}(1-p^{a-1}) - 2p^{1/2+\epsilon}}\right\}
$$
and 
$$|W_{2i}| \le \Delta + 2(\log p)^2 / a_n$$ 
holds on $E$ since $\displaystyle\max_{1 \le i \le n} |Z_i| < \log p$ and $1/h^2 = \log p$. Therefore, 
$$|W_i| \le 1 + 2(\log p)^2 / a_n.$$
In addition, from (\ref{eqn : large p condition}), one can obtain
\begin{align*}
   E\left[W_1 \right] &= E \left[ W_{1i} \,\bigg |\, |Z_i| + a_n\Delta \le \sqrt{2\alpha(\log p + 1)} \right] \times P \left(|Z_i|+a_n\Delta \le \sqrt{2\alpha(\log p + 1)}\right) \\
    &\quad + E \left [W_{2i} \,\bigg |\, |Z_i| + a_n\Delta > \sqrt{2\alpha(\log p + 1)} \right] \times P \left(|Z_i|+a_n\Delta > \sqrt{2\alpha(\log p + 1)} \right) \\
    & > \left\{ \frac{h^2}{1+h^2}a_n\Delta - \cfrac{1 + 2\sqrt{2\pi}p^{1/2 + \epsilon} + \log p}{p^{1 - \alpha} (1 - p^{a-1}) - 2p^{1/2 + \epsilon}} \right \}(1 - \gamma) + \left( \Delta - \cfrac{1}{a_nh^2} \log p \right) \gamma, 
\end{align*}

where we define $\gamma = P \left( |Z_i| + a_n\Delta > \sqrt{2\alpha(\log p + 1)} \right)$. Note that $$E \left[Z_i \, \bigg | \, |Z_i| < c\right] = 0.$$

By applying lemma \ref{lemma: maximum prob bound} and comparing the order of $p$ in each term, one can obtain that for $b$ which satisfy $b > \alpha + \epsilon - 1/2$ and $b > -\alpha$, 
$$
E \left [W_1 \right ] \ge \cfrac{h^2}{1+h^2}a_n\Delta \left\{ 1 + o(1) \right\}
$$ holds. Note that when $b > -1/4$, in which all $(a, b) \in \mathcal{R}_{NPEB}$ with $b<0$ are included, we can always pick such $\alpha, \epsilon>0$ as $\epsilon = b+ 1/4$, $\alpha = 1/4$. From the WLLN for triangular arrays, 

\begin{equation}
    \cfrac{\displaystyle\sum_{i=1}^{p_1} W_i - p_1E[W_1]}{1 + 2(\log p)^2 / a_n}  \overset{\mathrm{p}}{\to}  0.
\end{equation}
\end{itemize}

Therefore, one can write

$$
\sum_{i=1}^{p_1} W_i \ge \cfrac{h^2}{1+h^2}a_n p^{a+b} \left\{ 1 + o(1) \right \} + o_p\left\{ (\log p)^2 \right \}.
$$
\end{proof}

\section{Proof of Theorem 1}


We define $\bar{z}_D = \bar{z}_1-\bar{z}_2$ 
which has $\bar{z}_D \sim N\left(\mu_D^*, 
a_n^{-2} I_p \right)$
 where $a_n = \left(1/n_1+1/n_2 \right)^{-1/2}$. 
From (\ref{eqn: NPEB expression}), the estimator using EB method of mean difference $\mu_D^*$ is as follows:

\begin{equation}
    \widehat{\mu_{D,i}}^{EB} = \bar{z}_{D,i}+\frac{1}{h^2}
    \frac{\displaystyle\sum_{j=1}^p (\bar{z}_{D,j}-\bar{z}_{D,i}) \, \phi\left \{\cfrac{a_n(\bar{z}_{D,i}-\bar{z}_{D,j})}{h}\right \}}
    {\displaystyle\sum_{j=1}^p  \phi \left \{\cfrac{a_n(\bar{z}_{D,i}-\bar{z}_{D,j})}{h} \right \} }.
    \label{eqn:EB}
\end{equation}
Here, $\widehat{\mu_{D,i}}^{EB}, \bar{z}_{D,i}$ denotes the $i$-th component of the EB estimator of the mean vector  $\widehat{\mu_D}^{EB}$ and $\bar{z}_D$, respectively. $\phi(\cdot)$ is the probability distribution function of standard Gaussian distribution. 
According to \cite{brown2009nonparametric}  and \cite{greenshtein2009application}, 
we choose the bandwidth $h = 1/\sqrt{\log p}$ in \eqref{eqn:EB}. 

Since $\bar{z}_D \sim N(\mu_D^*, a_n^{-2}I_p)$, 
we have 
$\bar{z}_{D,i} \sim N(\mu_i^*, a_n^{-2})$ which are mutually independent. Thus, we can rewrite 
$\bar{z}_{D,i}$ as follows using independent  $Z_i \sim N(0,1)$ for $1\leq i \leq p$ : 
$$
\bar{z}_{D,i} = 
\begin{cases}
 \Delta +  Z_i / a_n, &  \mbox{if }1\leq i \leq p_1\\
Z_i / a_n, &  \mbox{if }p_1+1\leq i \leq p.
\end{cases}
$$
For simplicity in expressions, we define the followings, 
{aligned with lemma 4} : 
\begin{align}
   Z_{ij}^- = Z_i - Z_j, \quad  \quad &Z_i^*= Z_i + a_n \Delta,  \label{eqn:SijSi}\\
    T_i = \exp \left\{-\frac{Z_i^2}{2(1+h^2)} \right\}, \quad 
    &T_i^* =  \exp \left \{-\frac{Z_i^{*2}}{2(1+h^2)} \right \}. \label{eqn:TT}
\end{align}
With these notations,  $\widehat{\mu_{D,i}}^{EB}$ is represented as

\begin{equation}\label{eqn: mu hat}
\widehat{\mu_{D,i}}^{EB} = 
\begin{cases}
\Delta + \cfrac{Z_i}{a_n} + \cfrac{\displaystyle\sum_{j=1}^{p_1}\cfrac{Z_{ji}^{-}}{a_n} \, \phi (Z_{ij}^{-} / h) + 
\displaystyle\sum_{j=p_1+1}^{p} \cfrac{Z_{ji}^{-} - a_n \Delta}{a_n} \, \phi \left\{ (Z_{ij}^- + a_n\Delta )/h\right\}}
{h^2 \left [\displaystyle\sum_{j=1}^{p_1}\phi (Z_{ij}^- / h) + \displaystyle\sum_{j=p_1+1}^{p} \phi \left\{ (Z_{ij}^- + a_n\Delta )/h\right\} \right ]
}, & \mbox{if }1\leq i \leq p_1\\ \\

\cfrac{Z_i}{a_n} + 
\cfrac{\displaystyle\sum_{j=1}^{p_1}\cfrac{Z_{ji}^{-}+a_n\Delta}{a_n} \, \phi \left\{ (Z_{ij}^- - a_n\Delta )/h\right\} + 
\displaystyle\sum_{j=p_1+1}^{p} \cfrac{Z_{ji}^{-}}{a_n} \,\phi \left(Z_{ij}^- / h \right)}
{h^2 \left[ \displaystyle\sum_{j=1}^{p_1} \phi \left\{ (Z_{ij}^- - a_n\Delta )/h\right\} + \displaystyle\sum_{j=p_1+1}^{p} \phi \left(Z_{ij}^- / h \right)
\right ]}, &\mbox{if } p_1+1\leq i \leq p 
\end{cases}
\end{equation}

We now show that, 
for $(a, b) \in A\cup C\cup D$, 
 we have the divergence of $V$ {to $\infty$} in probability 
as in \eqref{eqn:perfect condition}. 

For this, we consider two cases depending on the sign of $b$: 
$(i)$ $b<0$ for $(a, b) \in A \cup D$ and $(ii)$ $b>0$ for  $(a, b) \in C$. 
First, we show that among the region that satisfies $b<0$, the region $A\cup D$ is a subset of $\mathcal{R}_{N P E B}$.

\subsection*{Claim I: $A \cup D \subset \mathcal{R}_{NPEB}$} 

In this case, we provide the proof of $\mathcal{R}_{NPEB} \supset A \cup D$, in which $(a, b)$ satisfies $a+2b>0$ and $-1/4 < b < 0$. 
For the proof of \eqref{eqn:perfect condition},  
we have the following roadmap consisting of two steps : 
\begin{enumerate}
    \item [ \it{ Step 1} :]  We present 
    a lower bound of the numerator in $V$ such as   
    $$\widehat{\mu_D^*}^T\mu_D^* \geq 
     \frac{h^2}{1+h^2} a_n p^{a+2 b}\left\{1+o_p(1)\right\}.$$
    \item [\it{Step 2} :] We then propose an upper bound of the denominator in $V$ as $$||\widehat{\mu_D^*}||_2=O_p\left\{p^{\max \left(2 a+2 b-1, \epsilon^{\prime}\right)}\right\},$$
    where $\epsilon'$ is an arbitrary positive constant.
\end{enumerate}





We present the proofs of $(i)$ Step 1 and $(ii)$ Step 2 as follows. 


 \noindent
 \textit{$\bullet$ Proof of Step 1: }
From the expression of $\widehat{\mu_{D,i}}^{EB}$ in \eqref{eqn: mu hat},  one can observe that 
\begin{eqnarray}
\widehat{\mu_{D,i}}^{EB} 
     & \equiv & \Delta + \cfrac{Z_i}{a_n}+ \frac{1}{h^2} \left( \sum_{j=1}^{p_1} w_j\cfrac{Z_{ji}^-}{a_n} + \sum_{j=p_1+1}^p w_j  \cfrac{Z_{ji}^{-} - a_n \Delta}{a_n} \right) 
    \label{eqn: Ki weighted sum}
\end{eqnarray}
 for $1\leq i \leq p_1$,  where  $\sum_{j=1}^p w_j =1$ with 
\begin{eqnarray}
w_j =  \phi (Z_{ij}^{-}/h) \Bigg/
\left[\displaystyle\sum_{j=1}^{p_1}\phi (Z_{ij}^- / h) + \displaystyle\sum_{j=p_1+1}^{p} \phi \left\{ (Z_{ij}^- + a_n\Delta )/h\right\} \right]    
\end{eqnarray}
for $1\leq j \leq p_1$ and 
\begin{eqnarray}
w_j =  \phi \left\{ (Z_{ij}^- + a_n\Delta )/h\right\} \Bigg/
\left[\displaystyle\sum_{j=1}^{p_1}\phi (Z_{ij}^- / h) + \displaystyle\sum_{j=p_1+1}^{p} \phi \left\{ (Z_{ij}^- + a_n\Delta )/h\right\} \right]    
\end{eqnarray}
for $p_1+1 \leq j \leq p$. From (\ref{eqn: Ki weighted sum}), one can check that

\begin{equation}\label{eqn:Ki min bound}
\widehat{\mu_{D,i}}^{EB} \ge \Delta + \cfrac{Z_i}{a_n} + \cfrac{1}{a_n h^2}
    \min{\{(Z_{ji}^-)_{1 \le j \le p_1}, (Z_{ji}^- + a_n \Delta)_{p_1+1 \le j \le p}\}}
\end{equation}

always holds for $1\leq i \leq p_1$  
since the weighted mean is always greater than the minimum value.

 

Subsequently, we define $A_i, B_i, C_i, D_i$ as follows:

\begin{eqnarray*}
A_i &=& \sum_{j=1}^{p_1}\cfrac{Z_{ji}^{-}}{a_n} \, \phi (Z_{ij}^{-} / h) 
\equiv  \sum_{j=1}^{p_1} w_{1j} \cfrac{Z_{ji}^{-}}{a_n}, \\
B_i &=& \sum_{j=p_1+1}^{p}
\phi \left\{ (Z_{ij}^- + a_n\Delta )/h\right\} \cfrac{Z_{ji}^{-} - a_n \Delta}{a_n} \equiv \sum_{j=1}^{p_1} w_{2j} 
\cfrac{Z_{ji}^{-} - a_n \Delta}{a_n},\\
C_i &=& \sum_{j=1}^{p_1} \phi (Z_{ij}^- / h)
\equiv \sum_{j=1}^{p_1}w_{1j}, \\
D_i &=& \sum_{j=p_1+1}^{p} \phi \left\{ (Z_{ij}^- + a_n\Delta )/h\right\}
\equiv \sum_{j=p_1+1}^{p} w_{2j}.
\end{eqnarray*}

For $Z_i$ and $Z_j$ $(i \neq j)$,
one can check
\begin{eqnarray*}
E \left[ \phi (Z_{ji}^- / h)  \,\bigg |\, Z_i\right] &=& \cfrac{h}{\sqrt{2\pi}\sqrt{1+h^2}} T_i, \\ 
E \left[Z_{ij}^- \phi \left({Z_{ji}^-}/{h} \right) \,\bigg |\, Z_i\right] &=& - \cfrac{h^3}{\sqrt{2\pi}\sqrt{1+h^2}^3} Z_i T_i
\end{eqnarray*}

by direct calculation. A similar relation holds for $Z_i^*$ and $T_i^*$ as follows.
\begin{eqnarray*}
E \left[ \phi \left \{(Z_{ji}^- - a_n\Delta)/h \right\} \bigg | Z_i^*\right] &=& \cfrac{h}{\sqrt{2\pi}\sqrt{1+h^2}} T_i^*, \\ 
E \left [(Z_{ij}^- + a_n\Delta) \phi \left \{(Z_{ji}^- - a_n\Delta)/h \right\} \bigg | Z_i^*\right] &=& - \cfrac{h^3}{\sqrt{2\pi}\sqrt{1+h^2}^3} Z_i^* T_i^*.
\end{eqnarray*}

In addition, using the fact that $\phi(x) \in (0, 1/\sqrt{2\pi}]$, we obtain the following result : for $1\le i \le p_1$,  
$Z_{ij}^-$ are conditionally independent on $Z_i$, for all $j\neq i$. Therefore, for every $\epsilon > 0$, by Hoeffding's inequality, we have 
\begin{eqnarray}
P\left( \left|C_i - \frac{1}{\sqrt{2\pi}} -
    \frac{(p_1 - 1)h}{\sqrt{2 \pi}\sqrt{1+h^2}}T_i \right|>p_1^{1/2+\epsilon} \,\Bigg | \,Z_i \right) 
    \le 2\exp(-4\pi (p_1-1)^{2\epsilon})
    \label{eqn:Choeffiding}
\end{eqnarray}
since 
$$E \left[ C_i \,|\,Z_i \right] = E \left[ \sum_{j=1}^{p_1}\phi \left(Z_{ij}^{-}/h \right) \,\bigg |\,Z_i \right] =  \cfrac{1}{\sqrt{2\pi}} +
    \cfrac{(p_1 - 1)h}{\sqrt{2 \pi}\sqrt{1+h^2}}T_i.$$ 
By taking the expectation of $Z_i$ in \eqref{eqn:Choeffiding}, we have 
\begin{eqnarray}
P\left( \left|C_i - \frac{1}{\sqrt{2\pi}} -
    \frac{(p_1 - 1)h}{\sqrt{2 \pi}\sqrt{1+h^2}}T_i \right|>p_1^{1/2+\epsilon} \right ) \le 2\exp \left \{-4\pi (p_1-1)^{2\epsilon} \right \}.     
    \label{eqn:probboundC}
\end{eqnarray}
Similarly, we also have     
\begin{align}\label{eqn: prob bound CD}
    &P \left( \left| D_i -
    \frac{(p - p_1)h}{\sqrt{2\pi}\sqrt{1+h^2}} T_i^* \right| >(p-p_1)^{1/2+\epsilon} \right ) \le 2\exp \left \{-4\pi (p-p_1)^{2\epsilon} \right \}.
\end{align}
Also, from $|x\phi(x)| \le 1/\sqrt{2 \pi e}$, we obtain
\begin{align*}\label{eqn: prob bound AB}
     &P\left( \left|A_i +
    \cfrac{(p_1-1)h^3 Z_i}{\sqrt{2\pi}\sqrt{1+h^2}^3}T_i\right|>p_1^{1/2+\epsilon}  \right) \le 2\exp \left \{-\pi e (p_1-1)^{2\epsilon} \log p \right \}, \\
     &P\left(\left| B_i+ \cfrac{(p - p_1)h^3 Z_i^*}
     {\sqrt{2\pi}\sqrt{1+h^2}^3}T_i^*\right| > (p-p_1)^{1/2+\epsilon} \right) \le 2\exp \left \{-\pi e (p-p_1)^{2\epsilon} \log p \right \}.
\end{align*}
Now, we define the events $E_{1i}, \cdots, E_{5i}$ as follows:

\begin{eqnarray*}
    E_{1i} &=& \left\{\left|A_i +
    \cfrac{(p_1-1)h^3 Z_i}{\sqrt{2\pi}\sqrt{1+h^2}^3}T_i\right| < p_1^{1/2+\epsilon}  \right\}, \\
    E_{2i} &=& \left\{\left| B_i+ \cfrac{(p - p_1)h^3 Z_i^*}
     {\sqrt{2\pi}\sqrt{1+h^2}^3}T_i^*\right| < (p-p_1)^{1/2+\epsilon} \right\},  \\
    E_{3i} &=& \left\{ \left|C_i - \frac{1}{\sqrt{2\pi}} -
    \frac{(p_1 - 1)h}{\sqrt{2 \pi}\sqrt{1+h^2}}T_i \right| < p_1^{1/2+\epsilon} \right \}, \\
    E_{4i} &=& \left\{ \left| D_i -
    \frac{(p - p_1)h}{\sqrt{2\pi}\sqrt{1+h^2}}T_i^* \right| < (p-p_1)^{1/2+\epsilon} \right \}, \\
    E_{5i} &=& \left\{ \max_{1\le j \le p} |Z_j| + a_n\Delta < \log p \right\}.
\end{eqnarray*}
Let $E_i = \cap_{j=1}^5 E_{ji}$. Applying Bonferroni's inequality on $E_i^c$ and from Lemma \ref{lemma: max S_j prob}, we obtain
\begin{eqnarray*}
    P(E_i) &=& 1- P(E_i^c) = 1- P(\cup_{j=1}^{5} E_{ji}^c) \\
    &\ge& 1-\sum_{j=1}^{5} P(E_{ji}^c) \\
     &=&  1 - 8\exp \left \{-\pi e (p_1-1)^{2 \epsilon} \right \} - 1/p.
\end{eqnarray*}
Therefore, when we define $E = \cap_{i=1}^{p_1}E_i$, we have 
\begin{eqnarray}
    P(E) &=& 1- P(E^c) = 1-P(\cup_{i=1}^{p_1}E_i^c ) \nonumber\\
    &\ge& 1 -\sum_{i=1}^{p_1} \left \{ 1-P(E_i) \right \} \nonumber \\
    &\ge& 1 - 8p_1 \exp \left \{-\pi e (p_1-1)^{2 \epsilon} \right \} - 1/ p^{1-a} \label{eqn: prob bound for final event}
\end{eqnarray}

Note that for all $0 < a <1$, $\displaystyle\lim_{p \to \infty} P(E) = 1$. 
On the event $E$, we can set the bounds for $A_i, B_i, C_i, D_i$ to eventually build a lower bound of $ \widehat{\mu_{D,i}}^{EB}$. Since $\displaystyle\lim_{p \to \infty} P(E) = 1$, our aim is to define an appropriate lower bound, $W_i$, on $E$ which satisfies

\begin{eqnarray}
V = 
\frac{\widehat{\mu_D^*}^T\mu_D^*}{ 
||\widehat{\mu_D^*}||_2 } 
    = \frac{\Delta \sum_{i=1}^{p_1}  \widehat{\mu_{D,i}}^{EB} }{ 
||\widehat{\mu_D^*}||_2 }
     \geq  \frac{\Delta \sum_{i=1}^{p_1} W_i }{ 
||\widehat{\mu_D^*}||_2 }  \overset{\mathrm{p}}{\to} \infty.
\label{eqn:V}
\end{eqnarray}
for $(a, b) \in \mathcal{R}_{NPEB}$.

We directly apply the result in 
Lemma \ref{lemma:lowbound} and obtain  $\Delta\displaystyle\sum_{i=1}^{p_1}  \widehat{\mu_{D,i}}^{EB} \ge  
\Delta \displaystyle\sum_{i=1}^{p_1}W_i$ 
where $W_i$ is defined in Lemma 
\ref{lemma:lowbound} 
which leads to 
\begin{equation}
    \widehat{\mu_D^*}^T\mu_D^* = \Delta\sum_{i=1}^{p_1}  \widehat{\mu_{D,i}}^{EB} \ge 
    \Delta \displaystyle\sum_{i=1}^{p_1}W_i \ge \frac{h^2}{1+h^2}a_n p^{a+2b} \left \{1+o_p(1) \right \}.
\end{equation}
This inequality holds on $E$, and $E$ has the probability larger than $$1 - 8p_1\exp \left \{-\pi e (p_1 - 1)^{2\epsilon} \right \} - p^{-1+a},$$ which goes to 1 for all $0<a<1$ as $p \to \infty$. 

\noindent
 \textit{$\bullet$ Proof of Step 2: }
The denominator in (\ref{eqn:perfect condition}) can be bounded using the equation (45) in \cite{brown2009nonparametric}.
For $\nu = 1+h^2$, let

\begin{eqnarray*}
    \delta_i &:=& 
    \cfrac{\displaystyle\sum_{j=1}^{p_1}\Delta\phi \left\{(a_n\bar{z}_{D, i} - a_n\Delta)/\sqrt{\nu}\right\} }{\displaystyle\sum_{j=1}^{p_1}\phi \left\{(a_n\bar{z}_{D, i} - a_n\Delta)/\sqrt{\nu}\right\} + 
 \displaystyle\sum_{j=p_1+1}^{p} \phi \left( a_n\bar{z}_{D, i}/\sqrt{\nu}\right)}\\
 &=& 
 \begin{cases}
 \cfrac{p_1 \Delta \phi(Z_i / \sqrt{\nu})} {p_1\phi(Z_i / \sqrt{\nu}) +  (p-p_1)\phi \left\{(a_n\Delta + Z_i) / \sqrt{\nu}\right\}}, &  \mbox{if }1\leq i \leq p_1 \\
 \cfrac{p_1 \Delta\phi \left\{ (Z_i - a_n\Delta) / \sqrt{\nu} \right\}} {p_1\phi \left\{(Z_i - a_n\Delta) / \sqrt{\nu} \right \} +  (p-p_1)\phi \left(Z_i / \sqrt{\nu}\right)}, &  \mbox{if }p_1+1\leq i \leq p
 \end{cases}.
\end{eqnarray*}
Then the equation (45) in \cite{brown2009nonparametric} implies

\begin{equation}\label{eqn:brown and greenshtein}
 E \left[ \sum_{i=1}^{p}(\delta_i-\widehat{\mu_{D,i}}^{EB})^2 \right] = o(p^\epsilon)
\end{equation}
 for all $\epsilon > 0$. Using this, one obtain the probabilistic upper bound of $||\widehat{\mu}_D^{EB}||_2^2 $ from 

\begin{equation}\label{eqn:mu hat norm upper limit}
    ||\widehat{\mu}_D^{EB}||_2^2 \le 2\sum_{i=1}^p (\delta_i-\widehat{\mu_{D,i}}^{EB})^2 + 2\sum_{i=1}^p \delta_i^2.
\end{equation}

Since $Z_i$ are i.i.d., $E\left[\displaystyle\sum_{i=1}^p \delta_i^2 \right]$ is simplified as follows : 

\begin{align*}
     E \left[\sum_{i=1}^p \delta_i^2 \right] &= p_1 E \left[ 
      \cfrac{p_1 \Delta \phi(Z_i / \sqrt{\nu})} {p_1\phi(Z_i / \sqrt{\nu}) +  (p-p_1)\phi \left\{(a_n\Delta + Z_i) / \sqrt{\nu}\right\}}
     \right]^2\\
     &\quad+ (p-p_1) E \left[ 
     \cfrac{p_1 \Delta\phi \left\{ (Z_i - a_n\Delta) / \sqrt{\nu} \right\}} {p_1\phi \left\{(Z_i - a_n\Delta) / \sqrt{\nu} \right \} +  (p-p_1)\phi \left(Z_i / \sqrt{\nu}\right)}
     \right]^2\\
    & = p_1 E \left[\cfrac{p_1\Delta}{p_1 + (p-p_1)\exp \left\{-a_n\Delta(a_n\Delta + 2Z_i) \,/\, (2\nu) \right \}}\right]^2 \\
    &\quad + 
    (p-p_1) E \left[\cfrac{p_1 \Delta \exp \left\{-a_n\Delta(a_n\Delta - 2Z_i)\,/\,(2\nu) \right \}}
    {p_1  \exp \left [ -a_n\Delta(a_n\Delta - 2Z_i) \,/\, (2\nu) \right \} + (p-p_1)}\right]^2 \\
    & =: J_1 + J_2. \label{eqn:J1 J2 def}
\end{align*}

From $e^{-x} \ge 1-x$ for all $x$, under the event $E$, $J_1$ and $J_2$ satisfy

\begin{equation*}
    J_1 \le p_1\left \{\cfrac{p_1 \Delta}{p-(p-p_1)a_n\Delta(a_n\Delta + 2 \log p)/ 2}\right \}^2, \; J_2 \le (p-p_1) \left \{\cfrac{p_1 \Delta}{p_1 + (p-p_1)} \right \}^2,
\end{equation*}
which implies
    $E \left [ \sum_{i=1}^p \delta_i^2 \right ]\le p^{2a+2b-1}\left\{1+o(1)\right\}$. 
As a result, from (\ref{eqn:brown and greenshtein}) and (\ref{eqn:mu hat norm upper limit}), we have 
\begin{eqnarray}
||\widehat{\mu_D^{EB}}||_2^2 = O_p \left\{p^{\max(2a+2b-1, \epsilon')} \right\}   
\end{eqnarray}
for all $\epsilon' > 0$.

Therefore,  we prove the results in \textit{Step 1} and \textit{Step 2} leading to 
\begin{eqnarray*}
V &=& \frac{\widehat{\mu_D^*}^T\mu_D^*}{||\widehat{\mu_D^*}||_2} 
\geq \frac{h^2}{1+h^2} \frac{a_n p^{a+2 b} (1+o_p(1))  }{\sqrt{p^{\max(2a+2b-1, \epsilon')} O_p(1)} } \\
&=& \frac{a_n}{\log p} 
p^{a+2b - \max(a+b-1/2, \epsilon'/2 )} \frac{1+o_p(1)}{\sqrt{O_p(1)}} 
\overset{\mathrm{p}}{\to} 
\infty
\end{eqnarray*}

for $-1/4 < b < 0$ and  $a+2b > \epsilon' / 2$ since $a+2b - \max(a+b-1/2, \epsilon'/2 ) >0$  
for the given region of $(a,b)$. 

Subsequently, we show that the region $C$ is a subset of $\mathcal{R}_{N P E B}$ among the region that satisfies $b>0$.
\subsection*{Claim II: $C \subset \mathcal{R}_{NPEB}$}

{Here, we show (\ref{eqn:perfect condition}) is satisfied for all $(a, b) \in C$ under the NPEB method.}
Since $\widehat{\mu_{D}^*}^{T}\mu_D^* = \Delta \sum_{i=1}^{p_1}  \widehat{\mu_{D,i}}^{EB}$ and $ \widehat{\mu_{D,i}}^{EB} \ge V_{2i}$ always holds,

\begin{align}
    \widehat{\mu_{D}^*}^{T}\mu_D^* &\ge \Delta \left( \sum_{i=1}^{p_1} \left [\Delta + \cfrac{1}{a_n}Z_i + \cfrac{1}{a_n h^2} \min{\{(Z_{ji}^-)_{1 \le j \le p_1}, (Z_{ji}^- + a_n \Delta)_{p_1+1 \le j \le p}\}} \right ] \right) \\
    &\ge \Delta \left[ \sum_{i=1}^{p_1} \left \{\Delta + \left(\cfrac{1}{a_n} - \cfrac{1}{a_n h^2} \right)Z_i - \cfrac{1}{a_n h^2} \min_{1 \le j \le p}{Z_j} \right \} \right].
\end{align}

Therefore, from Lemma \ref{lemma: max S_j prob} and the 
{WLLN}, one can obtain $\widehat{\mu_{D}^*}^{T}\mu_D^* \ge p^{a+2b} \left \{1 + o_p(1) \right \}$. 

In addition, by using $E \left [\exp(cZ_i)\right ] = \exp(c^2/2)$  for any constant $c>0$,
we have 
\begin{align*}
    E \sum_{i=1}^{p} \delta_i^2 &\le p_1 \Delta ^ 2 + (p - p_1) E \left [
    \cfrac{p_1 \Delta\exp \left\{ -a_n\Delta (a_n\Delta - 2Z_i) \,/\, (2\nu)\right\}
    }{p - p_1} \right ] ^2 \\
    & \le p^{a+2b} + \cfrac{p^{2a+2b}}{p - p_1}  = p^{a+2b} 
     + p^{2a+2b-1}(1+o(1))\\
     &= p^{a+2b}(1+o(1))  
\end{align*}
where the last equality is from $a<1$.  Therefore, from (\ref{eqn:brown and greenshtein}) and (\ref{eqn:mu hat norm upper limit}), we obtain 
\begin{equation}
    ||\widehat{\mu_D^*}||_2^2 = O_p(p^{a+2b}). 
\end{equation}
leading to 
$$
V = \cfrac{\widehat{\mu_D^*}^T\mu_D^*}{||\widehat{\mu_D^*}||_2} 
\ge  \cfrac{p^{a+2b} \left \{1+o_p(1) \right \}}{O_p \left \{p^{(a+2b)/2} \right \}} \overset{\mathrm{p}}{\to} \infty
$$
for all $a, b>0$. 

Therefore, our main Theorem \ref{thm:NPEB} is proved by Claims I and II.
\qed
\end{document}